\title{On explicit formulae of LMOV invariants}
\author[a,b]{Shengmao Zhu\thanks{szhu@zju.edu.cn, shengmaozhu@126.com}}
\affil[a]{Department of Mathematics, Zhejiang International Studies
University}
\affil[b]{Center of Mathematical Sciences, Zhejiang University}
\newtheorem{thm}{Theorem}[section]
\newtheorem{lemma}[thm]{Lemma}
\newtheorem{corollary*}[thm]{Corollary*}
\newtheorem{proposition}[thm]{Proposition}
\newtheorem{proposition*}[thm]{Proposition*}
\newtheorem{conjecture}[thm]{Conjecture}
\theoremstyle{definition}
\newtheorem{remark}[thm]{Remark}
\newtheorem{definition}[thm]{Definition}
\newtheorem{defn-thm}[thm]{Definition-Theorem}
\date{} 
\begin{document}
  \maketitle
\begin{abstract}
We started a program to study the open string integrality invariants
(LMOV invariants) for toric Calabi-Yau 3-folds with Aganagic-Vafa
brane (AV-brane) several years ago. This paper is devoted to the
case of resolved conifold with one out AV-brane in any integer
framing $\tau$, which is the large $N$ duality of Chern-Simons
theory for a framed unknot with integer framing $\tau$ in $S^3$. By
using the methods from string dualities,  we compute several
explicit formulae of the corresponding LMOV invariants for this
special model, whose integrality properties have been proved in a
separated paper \cite{LZhu}.
\end{abstract}

\section{Introduction}
We are interested in integrality structures of topological strings
theory. Let $X$ be a Calabi-Yau 3-fold with symplectic form
$\omega$, according to the work of Gopakumar and Vafa \cite{GV1},
the closed string free energy $F^X$, which is the generating
function of Gromov-Witten invariants $K_{g,Q}$, has the following
structure:
\begin{align*}
F^X=\sum_{g\geq 0}g_s^{2g-2}\sum_{Q\neq 0}K_{g,Q}e^{-Q
\cdot\omega}=\sum_{g\geq 0, d\geq 1}\sum_{Q\neq
0}\frac{1}{d}N_{g,Q}\left(2\sin\frac{dg_s}{2}\right)^{2g-2}e^{-dQ\cdot
\omega}
\end{align*}
where $N_{g,Q}$ are integers and vanish for large $g$ or $Q$. When
$X$ is a toric Calabi-Yau 3-fold, the above Gopakumar-Vafa
conjecture was proved in \cite{Peng,Konishi}.

Then we are going to investigate the integrality structures of open
topological strings. Let us consider a Calabi-Yau 3-fold $X$ with a
Lagrangian submanifold $\mathcal{D}$ in it. Based on Ooguri and
Vafa's work \cite{OV}, the generating function of open Gromov-Witten
invariants can also be expressed in terms of a series of new
integers which were later refined by Labastida, Mari\~no and Vafa in
\cite{LM1,LM2,LMV}:
\begin{align} \label{openmultiplecoveringintro}
&\sum_{g\geq 0}\sum_{Q\neq 0}g_s^{2g-2+l(\mu)}K_{\mu,g,Q}e^{-Q\cdot
\omega}\\\nonumber &=\sum_{g\geq 0}\sum_{Q\neq
0}\sum_{d|\mu}\frac{(-1)^{l(\mu)+g}}{\prod_{i=1}^{l(\mu)}\mu_i}d^{l(\mu)-1}n_{\mu/d,g,Q}\prod_{j=1}^{l(\mu)}
(2\sin\frac{\mu_jg_s}{2})(2\sin\frac{dg_s}{2})^{2g-2}e^{-dQ\cdot
\omega}.
\end{align}
These new integers $n_{\mu,g,Q}$ (here $\mu$ denote a partition of a
positive integer) are referred as LMOV invariants in this paper.

Although for any toric Calabi-Yau 3-fold with Agangica-Vafa brane
(AV-brane for short) \cite{AV}, we have an effective  method  to
calculate the open string partition function by gluing topological
vertices \cite{AKMV,LLLZ}, it is difficult to compute the
corresponding LMOV invariants $n_{\mu,g,Q}$ and prove their
integrality properties.

During the past several years, we started a program to study the
LMOV invariants and their variations
\cite{CLPZ,CZhu,LZhu0,LZhu,Zhu3,Zhu4,Zhu5}. In this paper, we only
focus on a special toric Calabi-Yau 3-fold, i.e. the resolved
conifold $\hat{X}$ with one special Lagrangian submanifold (AV-brane
$D_\tau$ in integer framing $\tau$). More general cases will be
discussed in a separated paper \cite{Zhu5}. We also refer to
\cite{MMMRKS,MMMS,PS} and the references therein for some recent
developments.

Based on large $N$ duality, the open string theory of
$(\hat{X},D_\tau)$ is the large $N$ duality of the $SU(N)$
Chern-Simons theory of $(S^3,U_\tau)$, where $U_\tau$ denotes a
framed unknot (trivial knot) with integer framing $\tau$. The large
$N$ duality of Chern-Simons and topological string theory was
proposed by Witten \cite{W3}, and developed further by
\cite{GV2,OV,LMV}. Later, Mari\~no and Vafa \cite{MV} generalized it
to the case of a knot with arbitrary framing. The large $N$ duality
of $(\hat{X},D_\tau)$ and $(S^3,U_\tau)$ can be  expressed in terms
of the following identity:
\begin{align} \label{LargeNdualiyofunknot}
Z_{CS}^{(S^3,U_\tau)}(q,a;\mathbf{x})=Z_{str}^{(\hat{X},D_{\tau})}(g_s,a;\mathbf{x}),
\ q=e^{\sqrt{-1}g_s}
\end{align}
where the explicit expressions of the above two partition functions
in identity (\ref{LargeNdualiyofunknot}) are given by the formulae
(\ref{partitionfunctionframedunknot}) and
(\ref{partitonfunctionresolvedconifold}) respectively.  The identity
(\ref{LargeNdualiyofunknot}) implies the Mari\~no-Vafa formula
\cite{MV,LLZ,OP}, a very powerful Hodge integral identity, which
implies various important results in intersection theory of moduli
spaces of curves, see \cite{Zhu1,LiuZhu} for a review of the
applications of Mari\~no-Vafa formula. Finally, the identity
(\ref{LargeNdualiyofunknot}) was proved by J. Zhou \cite{Zhou} based
on his previous joint works with C.-C. Liu and K. Liu
\cite{LLZ,LLZ2}.

On the other hand side, through mirror symmetry, the partition
function $Z_{str}^{(\hat{X},D_{\tau})}(g_s,a;\mathbf{x})$ can also
be computed by topological string $B$-model \cite{BCOV}. The mirror
geometry information of $(\hat{X},D_{\tau})$ is encoded in a mirror
curve $\mathcal{C}_{\hat{X}}$. Then the disc counting invariants of
$(\hat{X},D_{\tau})$ were given by the coefficients of
superpotential related to the mirror curve \cite{AV,AKV}, and this
fact was proved in \cite{FL}. Furthermore, the open Gromov-Witten
invariants of higher genus with more holes can be obtained by using
Eynard-Orantin topological recursions \cite{EO1}. This approach
named BKMP conjecture, was proposed by Bouchard, Klemm, Mari\~no and
Pasquetti in \cite{BKMP}, and then fully proved in \cite{EO2,FLZ}
for any toric Calabi-Yau 3-fold with  AV-brane, so one can also use
the BKMP method to compute the LMOV invariants for
$(\hat{X},D_{\tau})$.

In conclusion, now we have three different approaches to compute the
open string partition function
$Z_{str}^{(\hat{X},D_{\tau})}(g_s,a;\mathbf{x})$: (i) topological
vertex \cite{AKMV,LLLZ}; (ii) Chern-Simons partition function
(\ref{partitionfunctionframedunknot}); (iii) BKMP method
\cite{BKMP}.

In this paper, we provide several explicit formulae for LMOV
invariants of the open string model $(\hat{X},D_{\tau})$ by using
above methods. Firstly, we illustrate the computations of the mirror
curve of $(\hat{X},D_{\tau})$ by the approach in \cite{AV2}. It
turns out that the mirror curve of this model is given by:
\begin{align} \label{mirrorcurve}
y-1-a^{-\frac{1}{2}}(-1)^\tau xy^{\tau}(ay-1)=0.
\end{align}
By using the mirror curve (\ref{mirrorcurve}), we obtain an explicit
formula for genus $0$ and one-hole LMOV invariants (disc countings)
$n_{m,0,l-\frac{m}{2}}(\tau)$ which is denoted by $n_{m,l}(\tau)$
for brevity:
\begin{align*}
n_{m,l}(\tau)=\sum_{d|m,d|l}\frac{\mu(d)}{d^2}c_{\frac{m}{d},\frac{l}{d}}(\tau),
\end{align*}
where
\begin{align*}
c_{m,l}(\tau)=-\frac{(-1)^{m\tau+m+l}}{m^2}\binom{m}{l}\binom{m\tau+l-1}{m-1}
\end{align*}
and $\mu(d)$ denotes the M\"{o}bius functions. In \cite{LZhu}, we
have proved that  $n_{m,l}(\tau) \in \mathbb{Z}$ for any $\tau\in
\mathbb{Z}$, $m\geq 1, l\geq 0$.

\begin{remark}
Recently, Panfil and Sulkowski \cite{PS} generalized the above disc
counting formula to a class of toric Calabi-Yau manifolds without
compact four-cycles which is also referred to as strip geometries.
In our notations, their formula (cf. formula (4.19) in \cite{PS})
can be formulated as follow. Given two integers $r,s\geq 0$, set
$\mathbf{l}=(l_1,...,l_r)$, $\mathbf{k}=(k_1,...,k_s)$, and
$|\mathbf{l}|=\sum_{j=1}^rl_j$, $|\mathbf{k}|=\sum_{j=1}^sk_j$. We
define
\begin{align*}
c_{m,\mathbf{l},\mathbf{k}}(\tau)=\frac{(-1)^{m(\tau+1)+|\mathbf{l}|}}{m^2}
\binom{m\tau+|\mathbf{l}|+|\mathbf{k}|-1}{m-1}\prod_{j=1}^{r}\binom{m}{l_j}\prod_{j=1}^s
\frac{m}{m+k_j}\binom{m+k_j}{k_j}.
\end{align*}
Then, we have the following disc counting formula
\begin{align*}
n_{m,\mathbf{l},\mathbf{k}}(\tau)
=\sum_{d|\text{gcd}(m,\mathbf{l},\mathbf{k})}\frac{\mu(d)}{d^2}c_{m/d,\mathbf{l}/d,\mathbf{k}/d}(\tau)
\end{align*}
In \cite{Zhu5}, we have generalized the number theory method used in
\cite{LZhu} to show that $n_{m,\mathbf{l},\mathbf{k}}(\tau)\in
\mathbb{Z}$.
\end{remark}

For the LMOV invariants of genus 0 with two holes, we study the
Bergmann kernel expansion in the BKMP construction, and find an
explicit formula for the LMOV invariants
$n_{(m_1,m_2),0,\frac{m_1+m_2}{2}}(\tau)$ which is denoted by
$n_{(m_1,m_2)}(\tau)$ for short,
\begin{align*}
    n_{(m_1,m_2)}(\tau)& =  \frac{1}{m_1+m_2}\sum_{d\mid m_1,d\mid m_2} \mu(d) (-1)^{(m_1+m_2)(\tau+1)/d} \nonumber \\
                     & \cdot
                     \binom{(m_1\tau+m_1)/d-1}{m_1/d}\binom{(m_2\tau+m_2)/d}{m_2/d}.
\end{align*}
In \cite{LZhu}, we have also proved that $n_{(m_1,m_2)}(\tau)\in
\mathbb{Z}$ for $m_1,m_2\geq 1$ and $\tau\in \mathbb{Z}$.

As to the genus $0$ LMOV invariants with more than two holes, one
can compute the LMOV invariant $n_{\mu,0,Q}(\tau)$ for general $Q$
by using the BKMP construction. But it is hard to give an explicit
formula for general $Q$, except $Q=\frac{|\mu|}{2}$ in which case
\begin{align*}
n_{\mu,0,\frac{|\mu|}{2}}(\tau)=(-1)^{l(\mu)}\sum_{d|\mu}\mu(d)d^{l(\mu)-1}K^{\tau}_{\frac{\mu}{d},0,\frac{|\mu|}{2d}}
\end{align*}
where
\begin{align*}
K^\tau_{\mu,0,\frac{|\mu|}{2}}=(-1)^{|\mu|\tau}[\tau(\tau+1)]^{l(\mu)-1}\prod_{i=1}^{l(\mu)}
\binom{\mu_i(\tau+1)-1}{\mu_i-1}\left(\sum_{i=1}^{l(\mu)}\mu_i\right)^{l(\mu)-3}.
\end{align*}
It is obvious that $K^\tau_{\mu,0,\frac{|\mu|}{2}}\in \mathbb{Z}$
for any $\tau\in \mathbb{Z}$, and since $l(\mu)\geq 3$, it
immediately implies that $n_{\mu,0,\frac{|\mu|}{2}}(\tau)\in
\mathbb{Z}$ for any partition $\mu$ with $l(\mu)\geq 3$ and $\tau
\in \mathbb{Z}$.

Finally, we study the high genus LMOV invariants $n_{m,g,Q}(\tau)$
of framed unknot $U_\tau$. We define
\begin{align*}
g_m(q,a)=\sum_{d|m}\mu(d)\mathcal{Z}_{m/d}(q^d,a^d),
\end{align*}
where
$\mathcal{Z}_{m}(q,a)=(-1)^{m\tau}\sum_{|\nu|=m}\frac{1}{\mathfrak{z}_\nu}
\frac{\{m\nu\tau\}}{\{m\}\{m\tau\}}\frac{\{\nu\}_a}{\{\nu\}}$. In
\cite{LZhu}, we have proved that $g_{m}(q,a)$ is a polynomial of
  $n_{m,g,Q}(\tau)$.
More precisely,
\begin{align*}
g_m(q,a)=\sum_{g\geq 0}\sum_{Q}n_{m,g,Q}(\tau)z^{2g-2}a^Q\in
z^{-2}\mathbb{Z}[z^2,a^{\pm \frac{1}{2}}],
\end{align*}
where $z=q^{\frac{1}{2}}-q^{-\frac{1}{2}}$. In other words, we have
\begin{align*}
n_{m,g,Q}(\tau)=\text{Coefficient of term $z^{2g-2}a^Q$ in the
polynomial } g_m(q,a).
\end{align*}

The rest of this paper is organized as follow: In Section 2, we
review the mathematical definitions of topological string partition
functions, free energies, and the integrality structures appearing
in topological strings. Then we introduce the LMOV invariants in
open topological strings. In Section 3, we first review Witten's
Chern-Simons theory for three-manifolds and links, and the large $N$
duality between the Chern-Simons theory and topological strings.
Then, the basic case for framed unknot was illustrated explicitly.
We also formulate the LMOV integrality conjecture for framed knot.
In Section 4, we study the LMOV invariants for framed unknot in
detail. We first illustrate the computations of the mirror curve of
$(\hat{X},D_{\tau})$ by using the approach of \cite{AV2}. Then we
compute the explicit formulae for genus 0 LMOV invariants by using
this mirror curve, and high genus LMOV invariants with one hole as
well. Finally, in Section 5, we discuss several related questions
and works.

\textbf{Acknowledgements.} This paper is grew out of our preprint
\cite{LZhu0} on arXiv,  some part of \cite{LZhu0} has been published
in \cite{LZhu}, the author would like to thank Dr. Wei Luo for the
collaboration of \cite{LZhu}.

\section{Topological strings}
\subsection{Closed topological strings and Gromov-Witten invariants}
Topological strings on a Calabi-Yau 3-fold $X$ have two types:
A-models and B-models. The mathematical theory for A-model is
Gromov-Witten theory \cite{Ko,HKKPTVVZ}. Let
$\overline{\mathcal{M}}_{g,n}(X,Q)$ be the moduli space of stable
maps $(f, \Sigma_g,p_1,..,p_n)$, where $f: \Sigma_g\rightarrow X$ is
a holomorphic map from the nodal curve $\Sigma_g$ to the K\"{a}hler
manifold $X$ with $f_*([\Sigma_g])=Q\in H_2(X,\mathbb{Z})$. In
general, $\overline{\mathcal{M}}_{g,n}(X,Q)$ carries a virtual
fundamental class $[\overline{\mathcal{M}}_{g,n}(X,Q)]^{vir}$ in the
sense of \cite{BF,LT}. The virtual dimension
 is given by:
\begin{align*}
\text{vdim} [\overline{\mathcal{M}}_{g,n}(X,Q)]^{vir}=\int_Q
c_1(X)+(\dim X-3)(1-g)+n.
\end{align*}
When $X$ is a Calabi-Yau 3-fold, i.e. $c_1(X)=0$, then vdim$
[\overline{\mathcal{M}}_{g,0}(X,Q)]^{vir}=0$. The genus $g$, degree
$Q$ Gromov-Witten invariants of $X$ is defined by
\begin{align*}
K_{g,Q}=\int_{[\overline{\mathcal{M}}_{g,0}(X,Q)]^{vir}}1
\end{align*}
which is usually denoted by $K_{g,Q}$ for brevity without any
confusions. In  A-model, the genus $g$ closed free energy $F_{g}^X$
of $X$ is the generating function of Gromov-Witten invariants
$K_{g,Q}$, i.e.
\begin{align*}
F_{g}^X=\sum_{Q\neq 0}K_{g,Q}e^{-Q \cdot\omega},
\end{align*}
where $\omega$ represents the K\"ahler class for $X$. We define the
total free energy $F^X$ and partition function $Z^X$ as
\begin{align*}
F^X=\sum_{g\geq 0}g_s^{2g-2}F_{g}^X, \ Z^X=\exp(F^X),
\end{align*}
where $g_s$ denotes the string coupling constant. In mathematics,
 the free energy $F^{X}$ are mainly computed by the method of
localizations \cite{Ko,GP}. Especially, when $X$ is a toric
Calabi-Yau 3-fold, we have a more effective approach to obtain the
partition function $Z^X$ by the method of gluing topological
vertices \cite{AKMV,LLLZ}.

 Usually, the Gromov-Witten invariants
$K_{g,Q}$ are rational numbers, from the BPS counting in M-theory,
Gopakumar and Vafa \cite{GV1} expressed the total free energy $F^X$
in terms of the generating function of a series of new integer
numbers $N_{g,Q}$ as follow:
\begin{align*}
F^X=\sum_{g\geq 0}g_s^{2g-2}\sum_{Q\neq 0}K_{g,Q}e^{-Q
\cdot\omega}=\sum_{g\geq 0, d\geq 1}\sum_{Q\neq
0}\frac{1}{d}N_{g,Q}\left(2\sin\frac{dg_s}{2}\right)^{2g-2}e^{-dQ\cdot
\omega}
\end{align*}
The integrality of Gopakumar-Vafa invariants $N_{g,Q}$ was first
proved by P. Peng for the case of toric Del Pezzo surfaces
\cite{Peng}. The proof for general toric Calabi-Yau threefolds was
then given by Konishi in \cite{Konishi}.

\subsection{Open topological strings}

Let us now consider the open sector of topological A-model of a
Calabi-Yau 3-fold $X$ with a Lagrangian submanifold $\mathcal{D}$
with dim $H_{1}(\mathcal{D},\mathbb{Z})=L$. The open sector
topological A-model can be described by holomorphic maps $\phi$ from
open Riemann surface of genus $g$ and $l$-holes $\Sigma_{g,l}$ to
$X$, with Dirichlet condition specified by $\mathcal{D}$.  These
holomorphic maps are referred as open string instantons. More
precisely, an open string instanton is a holomorphic map $\phi:
\Sigma_{g,l}\rightarrow X$ such that $\partial \Sigma_{g,l}=\cup
_{i=1}^{l}\mathcal{C}_i\rightarrow \mathcal{D}\subset X$ where the
boundary $\partial \Sigma_{g,l}$ of $\Sigma_{g,l}$ consists of $l$
connected components $\mathcal{C}_i$ mapped to Lagrangian
submanifold $\mathcal{D}$ of $X$. Therefore, the open string
instanton $\phi$ is described by the following two different kinds
of data: the first is the ``bulk part" which is given  by
$\phi_*[\Sigma_{g,l}]=Q\in H_2(X,\mathcal{D})$, and the second is
the ``boundary part" which is given by
$\phi_*[\mathcal{C}_i]=w^\alpha_i\gamma_\alpha$, for $i=1,..l$,
where $\gamma_\alpha$,\ $\alpha=1,..,L$ is a basis of
$H_{1}(\mathcal{D},\mathbb{Z})$ and $w^\alpha_i\in \mathbb{Z}$. Let
$\vec{w}=(w^1,..,w^L)$, and where
$w^\alpha=(w^\alpha_1,...,w^\alpha_l)\in \mathbb{Z}^l$, for
$\alpha=1,...,L$. We expect there exist the corresponding open
Gromov-Witten invariants $K_{\vec{w},g,Q}$ determined by the data
 $\vec{w}, Q$ in the genus $g$. See \cite{LS,KL} for mathematical
aspects of defining these invariants in special cases.

We take all $w_i\geq 1$ as in \cite{MV}, and use the notations of
partitions and symmetric functions \cite{Mac}. We denote by
$\mathcal{P}$ the set of all partitions including the empty
partition $0$, and by $\mathcal{P}_+$ the set of nonzero partitions.
Let $\mathbf{x}=\{x_1,x_2,...\}$ be the set of infinitely many
independent variables. For $n\geq 0$, let
$p_n(\mathbf{x})=\sum_{i\geq 1}x_i^{n}$ be a power sum symmetric
function. For a partition $\mu\in \mathcal{P}_{+}$, set
$p_{\mu}(\mathbf{x})=\prod_{i=1}^{h}p_{\mu_i}(\mathbf{x})$. For
$\vec{\mu}\in \mathcal{P}^L$, and
$\vec{\mathbf{x}}=(\mathbf{x}^1,..,\mathbf{x}^L)$, let
$p_{\vec{\mu}}(\vec{\mathbf{x}})=\prod_{\alpha=1}^Lp_{\mu^\alpha}(\mathbf{x}^\alpha)$.
The total free energy and partition function of open topological
string on $(X,\mathcal{D})$ are expressed in the following forms:
\begin{align*}
F_{str}^{(X,\mathcal{D})}(g_s,\omega,\vec{\mathbf{x}})&=\sum_{g\geq
0}\sum_{\vec{\mu} \in
\mathcal{P}^L\setminus{\{0\}}}\frac{1}{|Aut(\vec{\mu})|}g_{s}^{2g-2+l(\mu)}
\sum_{Q\neq 0}K_{\vec{\mu},g,Q}e^{-Q\cdot
\omega}p_{\vec{\mu}}(\vec{\mathbf{x}})\\\nonumber
Z_{str}^{(X,\mathcal{D})}(g_s,\omega,\vec{\mathbf{x}})&=\exp(F_{str}^{(X,\mathcal{D})}(g_s,\omega,\vec{\mathbf{x}})).
\end{align*}

The central problem in open topological string theory is how to
calculate the partition function
$Z_{str}^{(X,\mathcal{D})}(g_s,\omega,\vec{\mathbf{x}})$ or the open
Gromov-Witten invariants $K_{\vec{\mu},g,Q}$. For the case of
compact Calabi-Yau 3-folds, such as the quintic $X_5$, there are
only a few works devoted to the study of its open Gromov-Witten
invariants, for example, a complete calculation of the disk
invariants of $X_5$ with boundary in a real Lagrangian was given in
\cite{PSW}.

Suppose $X$ is a toric Calabi-Yau 3-fold, and $\mathcal{D}$ is a
special Lagrangian submanifold named as Aganagic-Vafa A-brane in the
sense of \cite{AV,AKV}. The open string partition function
$Z_{str}^{(X,\mathcal{D})}(g_s,\omega,\mathbf{x})$ can be computed
by the method of topological vertex \cite{AKMV,LLLZ} and the method
of topological recursion developed by Eynard and Orantin \cite{EO1}.
The second approach was first proposed by Mari\~no \cite{Mar}, and
studied further by Bouchard, Klemm, Mari\~no and Pasquetti
\cite{BKMP}, the equivalence of the two methods was proved in
\cite{EO2,FLZ}

In the following, we only need to consider the case of $L$=1. It is
also useful to introduce the generating function of $K_{\mu,g,Q}$ in
the fixed genus $g$ as follow:
\begin{align*}
F_{(g,l)}^{(X,\mathcal{D})}=\sum_{\mu\in \mathcal{P}_+}\sum_{Q\neq
0}K_{\mu,g,Q}e^{-Q\cdot \omega}x_1^{\mu_1}\cdots x_l^{\mu_l}.
\end{align*}

\subsection{Integrality structures and LMOV invariants}
We introduce the new variables $q=e^{\sqrt{-1}g_s}$,
$a=e^{-\omega}$, and let $f_{\lambda}(q,a)$ be a function determined
by the following formula
\begin{align*}
F_{str}^{(X,\mathcal{D})}(g_s,a,\mathbf{x})=\sum_{d=1}^{\infty}\frac{1}{d}\sum_{\lambda\in
\mathcal{P}^+}f_{\lambda}(q^d,a^d)s_{\lambda}(\mathbf{x}^d),
\end{align*}
where $s_{\lambda}(\mathbf{x})$ is the Schur symmetric functions
\cite{Mac}.

 Just as in the closed string case \cite{GV1},  the open
topological strings compute the partition function of BPS domain
walls in a related superstring theory \cite{OV}. It follows that
$F^{(X,\mathcal{D})}$ also has an integral expansion. This
integrality structure was further refined in \cite{LM1,LM2,LMV}
which showed that $f_{\lambda}(q,a)$ has the following integral
expansion
\begin{align*}
f_{\lambda}(q,a)=\sum_{g=0}^{\infty}\sum_{Q\neq
0}\sum_{|\mu|=|\lambda|}M_{\lambda\mu}(q)
N_{\mu,g,Q}(q^{\frac{1}{2}}-q^{-\frac{1}{2}})^{2g-2}a^Q,
\end{align*}
where  $N_{\mu,g,Q}$ are integers which compute the net number of
BPS domain walls and $M_{\lambda\mu}(q)$ is defined by
\begin{align} \label{Mlambdamu}
M_{\lambda\mu}(q)=\sum_{\mu}\frac{\chi_{\lambda}(C_{\nu})\chi_{\mu}(C_{\nu})}{\frak{z}_{\nu}}\prod_{j=1}^{l(\nu)}(q^{-\nu_{j}/2}-q^{\nu_{j}/2}),
\end{align}
where $\chi_{\nu}(C_\mu)$ is the character of an irreducible
representation of the symmetric group and
$\mathfrak{z}_\mu=|Aut(\mu)|\prod_{i=1}^{l(\mu)}\mu_i$.

For convenience, we usually introduce the invariant
\begin{align} \label{formula-invariantchange}
n_{\mu,g,Q}=\sum_{\nu}\chi_{\nu}(C_\mu)N_{\nu,g,Q}.
\end{align}
\begin{definition}
These predicted integers $N_{\mu,g,Q}$ and  $n_{\mu,g,Q}$ are both
called LMOV invariants.
\end{definition}
Therefore,
\begin{align*}
f_{\lambda}(q,a)=\sum_{g\geq 0}\sum_{Q\neq 0}\sum_{\mu \in
\mathcal{P}}\frac{\chi_{\lambda}(C_{\mu})}{\mathfrak{z}_{\mu}}n_{\mu,g,Q}\prod_{j=1}^{l(\mu)}
(q^{-\frac{\mu_j}{2}}-q^{\frac{\mu_j}{2}})(q^{-\frac{1}{2}}-q^{\frac{1}{2}})^{2g-2}a^Q
\end{align*}
By applying the orthogonal relation $
\sum_{\lambda}\frac{\chi_{\lambda}(C_{\mu})\chi_{\lambda}(C_{\nu})}{\mathfrak{z}_{\mu}}=\delta_{\mu,\nu},
$ we obtain the following multiple covering formula for open
topological string:
\begin{align} \label{openmultiplecovering}
&\sum_{g\geq 0}\sum_{Q\neq
0}g_s^{2g-2+l(\mu)}K_{\mu,g,Q}a^Q\\\nonumber &=\sum_{g\geq
0}\sum_{Q\neq
0}\sum_{d|\mu}\frac{(-1)^{l(\mu)+g}}{\prod_{i=1}^{l(\mu)}\mu_i}d^{l(\mu)-1}n_{\mu/d,g,Q}\prod_{j=1}^{l(\mu)}
(2\sin\frac{\mu_jg_s}{2})(2\sin\frac{dg_s}{2})^{2g-2}a^{dQ}.
\end{align}
Hence we have the following integrality structure conjecture which
is referred as the Labastida-Mari\~no-Ooguri-Vafa (LMOV) conjecture
for open topological string.
\begin{conjecture}[LMOV conjecture for open topological string]
\label{LMOVforopenstring} Let $F_{\mu}^{(X,\mathcal{D})}$ be the
generating function defined by
\begin{align*}
F_{str}^{(X,\mathcal{D})}(g_s,a,\mathbf{x})=\sum_{\mu\in
\mathcal{P}_+}F_{\mu}^{(X,\mathcal{D})}p_{\mu}(\mathbf{x}),
\end{align*}
then $F_{\mu}^{(X,\mathcal{D})}$ has the integral expansion as in
the righthand side of the formula (\ref{openmultiplecovering}).
\end{conjecture}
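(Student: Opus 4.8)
The plan is to establish the conjecture for the model $(\hat{X},D_\tau)$ at hand; in this generality the statement is known only case by case, so the strategy is to convert it into an explicit, finite integrality assertion that can be attacked by number theory. The first reduction is to observe that the claimed integral expansion on the right-hand side of (\ref{openmultiplecovering}) is equivalent to the integrality, together with the prescribed $z$-pole structure, of the reformulated invariants $N_{\mu,g,Q}$, or equivalently $n_{\mu,g,Q}$ through (\ref{formula-invariantchange}). Using the large $N$ duality (\ref{LargeNdualiyofunknot}) I would replace the geometric partition function $Z_{str}^{(\hat{X},D_\tau)}$ by the Chern-Simons partition function $Z_{CS}^{(S^3,U_\tau)}$, for which a closed form is available: a sum over partitions of quantum dimensions weighted by the framing factor $q^{\tau\kappa(\cdot)/2}$. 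This makes $F_{str}^{(X,\mathcal{D})}$, and hence each $F_\mu^{(X,\mathcal{D})}$, an explicit symmetric-function-valued series in $q^{\pm 1/2}$ and $a^{\pm 1/2}$.

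Second, I would extract the reformulated invariants. Taking the plethystic logarithm isolates the Adams structure $\sum_{d\ge 1}\frac{1}{d}\Psi_d$ relating $F_{str}^{(X,\mathcal{D})}$ to the functions $f_\lambda(q,a)$, and inverting the character relation defining $M_{\lambda\mu}(q)$ in (\ref{Mlambdamu}) via the orthogonality of symmetric-group characters expresses $n_{\mu,g,Q}$ as the coefficients of an explicit Laurent polynomial. For the one-hole sector this polynomial is exactly $g_m(q,a)=\sum_{d\mid m}\mu(d)\,\mathcal{Z}_{m/d}(q^d,a^d)$ built from the $\mathcal{Z}_m$ displayed in the introduction; for the genus-zero multi-hole sector it is produced by the Bergman-kernel expansion in the BKMP construction from the mirror curve (\ref{mirrorcurve}). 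Thus the conjecture is reduced to showing that each such M\"obius-inverted expression lies in $z^{-2}\mathbb{Z}[z^2,a^{\pm \frac{1}{2}}]$ with $z=q^{1/2}-q^{-1/2}$.

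The main obstacle is this final integrality step, and it splits into two parts. The analytic part is to control the pole in $z$: one must check that the small-$z$ expansion of the building blocks has at most a double pole and that all lower-order poles cancel, so that only the exponents $z^{2g-2}$ with $g\ge 0$ occur; this follows from a careful expansion of the factor $\{m\nu\tau\}/(\{m\}\{m\tau\})\cdot\{\nu\}_a/\{\nu\}$. The arithmetic part, which is genuinely the hard point, is to clear the apparent denominators $m^2$ (or $d^2$) and the quantum denominators $\{m\}\{m\tau\}$ simultaneously. Here the key input is a Frobenius-type congruence for the Adams operation, $\Psi_p(f)(q)=f(q^p)\equiv f(q)^p \pmod{p}$ on $\mathbb{Z}[q^{\pm 1/2}]$, extended across the cyclotomic factors appearing in $\{m\}$ and $\{m\tau\}$; combined with the M\"obius sum over $d\mid m$ this yields the refined necklace/Witt integrality that absorbs the $1/m^2$ and, at the same time, forces $\{m\}\{m\tau\}$ to divide the numerator in the ring of symmetric functions.

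I expect the divisibility bookkeeping to be delicate precisely because two independent denominators must be cleared at once by a single M\"obius inversion, and because the half-integer powers of $q$ and $a$ require tracking the parity factors $(-1)^{m\tau}$ carefully. This is exactly the number-theoretic analysis carried out for the present model in \cite{LZhu}, and extended to strip geometries in \cite{Zhu5}; a uniform proof of the full open LMOV conjecture for arbitrary toric $(X,\mathcal{D})$ would require packaging these congruences functorially, which remains open.
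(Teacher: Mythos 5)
The statement you were asked to prove is presented in the paper as a \emph{conjecture}, not a theorem: the paper contains no proof of it, explicitly says it ``is natural to ask how to prove'' it for toric Calabi--Yau 3-folds, and only supplies supporting evidence for the single model $(\hat{X},D_\tau)$, with the actual integrality proofs delegated to the companion paper \cite{LZhu}. So there is no proof in the paper against which your attempt can be matched line by line; the honest benchmark is whether your proposal would close the gap the paper leaves open.

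It would not, for two reasons. First, your entire reduction --- large $N$ duality to pass to $Z_{CS}^{(S^3,U_\tau)}$, the mirror curve (\ref{mirrorcurve}), the Bergmann kernel, the polynomial $g_m(q,a)=\sum_{d\mid m}\mu(d)\mathcal{Z}_{m/d}(q^d,a^d)$ --- is specific to the resolved conifold with one AV-brane. The conjecture as stated is for an arbitrary Calabi--Yau 3-fold $X$ with Lagrangian $\mathcal{D}$, so even a complete argument along your lines would only establish a special case (which is all the paper and \cite{LZhu} claim to do). Second, and more seriously, the step you yourself identify as ``genuinely the hard point'' --- clearing the denominators $m^2$ and $\{m\}\{m\tau\}$ simultaneously via a Frobenius congruence $\Psi_p(f)\equiv f^p \pmod p$ and the M\"obius sum --- is asserted rather than carried out. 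The claim that this congruence ``forces $\{m\}\{m\tau\}$ to divide the numerator'' is exactly the content of the number-theoretic analysis in \cite{LZhu}, and it is delicate there (one must control behaviour at each prime dividing $m$, track the half-integer powers of $a$, and verify the pole order in $z$ separately). Citing the mechanism is not the same as executing it. In fairness, your reconstruction of the programme --- reformulated invariants via character orthogonality and (\ref{formula-invariantchange}), the one-hole sector via $g_m$, the genus-zero multi-hole sector via BKMP --- accurately mirrors Section 4 of the paper; but as a proof of the stated conjecture it has the same status as the paper itself: a reduction plus an unproved (in this document) integrality assertion, for one model only.
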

There is no general definition for the open Gromov-Witten invariants
$K_{\mu,g,Q}$. However, just as mentioned in the previous
subsection, when $X$ is a toric Calabi-Yau 3-fold, and $\mathcal{D}$
is  the Aganagic-Vafa A-brane \cite{AV}, the open string partition
function $Z_{str}^{(X,\mathcal{D})}$ can be fully determined by
using the method of topological vertex \cite{AKMV,LLLZ}, and the
open Gromov-Witten invariants $K_{\mu,g,Q}$ can also be computed by
the topological recursion formula \cite{BKMP}. It is natural to ask
how to prove the Conjecture \ref{LMOVforopenstring} for the case of
toric Calabi-Yau 3-fold. Actually, this paper is devoted to this
conjecture for the resolved conifold with one AV-brane of framing
$\tau$.

\subsection{Lower genus cases}
We illustrate some lower genus cases for the above multiple covering
formula (\ref{openmultiplecovering}). By using the expansion $\sin
x=\sum_{k\geq 1}\frac{x^{2k-1}}{(2k-1)!}$, and taking the
coefficients of $g_s^{2g-2+l(\mu)}a^Q$ in formula
(\ref{openmultiplecovering}), we obtain
\begin{align} \label{multipecoveringgenus0}
K_{\mu,0,Q}=\sum_{d|\mu}(-1)^{l(\mu)}d^{l(\mu)-3}n_{\frac{\mu}{d},0,\frac{Q}{d}},
\end{align}
\begin{align*}
K_{\mu,1,Q}=\sum_{d|\mu}(-1)^{l(\mu)+1}\left(d^{l(\mu)-1}
n_{\frac{\mu}{d},1,\frac{Q}{d}}+\left(\frac{\sum_{j=1}^{l(\mu)}\mu_j^2}{24}d^{l(\mu)-3}-\frac{1}{12}d^{l(\mu)-1}\right)
n_{\frac{\mu}{d},0,\frac{Q}{d}}\right)
\end{align*}
\begin{align*}
&K_{\mu,2,Q}=\sum_{d|\mu}(-1)^{l(\mu)}\left(d^{l(\mu)+1}n_{\frac{\mu}{d},2,\frac{Q}{d}}
+\frac{\sum_{j=1}^{l(\mu)}\mu_j^2}{24}d^{l(\mu)-1}n_{\frac{\mu}{d},1,\frac{Q}{d}}\right.\\\nonumber
&\left.+\left(\frac{\sum_{j=1}^{l(\mu)}\mu_j^4}{1920}d^{l(\mu)-3}+\frac{\sum_{i<j}\mu_i^2\mu_j^2}{576}d^{l(\mu)-3}
-\frac{\sum_{j=1}^{l(\mu)}\mu_j^2}{288}d^{l(\mu)-1}+\frac{1}{240}d^{l(\mu)+1}\right)n_{\frac{\mu}{d},0,\frac{Q}{d}}\right)
\end{align*}
for $g=0$,  $g=1$ and $g=2$ respectively. These formulae were
firstly illustrated in \cite{MV}.

Therefore
\begin{align}  \label{multipecoveringlhole}
F_{(0,l)}&=\sum_{|\mu|=l}\sum_{Q}K_{\mu,0,Q}a^Qx_1^{\mu_1}\cdots
x_l^{\mu_l}\\\nonumber
&=\sum_{|\mu|=l}\sum_{Q}\sum_{d|\mu}(-1)^{l(\mu)}d^{l(\mu)-3}n_{\frac{\mu}{d},0,\frac{Q}{d}}a^{Q}x_1^{\mu_1}\cdots
x_l^{\mu_l}\\\nonumber & =(-1)^l\sum_{|\mu|=l}\sum_{Q}\sum_{d\geq
1}d^{l-3}n_{\mu,0,Q}a^{Q}x_1^{d\mu_1}\cdots x_l^{d\mu_l}.
\end{align}
In particular
\begin{align} \label{disccoutingformula}
F_{(0,1)}=-\sum_{m\geq 1}\sum_{d\geq
1}\sum_{Q}\frac{n_{m,0,Q}}{d^2}a^{dQ}x^{dm},
\end{align}
%
and for $g=1,l=1$,
\begin{align*}
F_{(1,1)}&=\sum_{m\geq 0}\sum_{Q}K_{(m),1,Q}a^{Q}x^{m}\\\nonumber
&=\sum_{m\geq
0}\sum_{Q}\left(\sum_{d|m}n_{m/d,1,Q/d}+(\frac{m^2}{24}d^{-2}-\frac{1}{12})n_{m/d,0,Q/d}\right)a^Qx^m\\\nonumber
&=\sum_{m\geq 0}\sum_{Q}\sum_{d\geq
1}\frac{1}{d}\left(n_{m,1,Q}+(\frac{m^2}{24}-\frac{1}{12})\right)a^{dQ}x^{dm}.
\end{align*}

\section{Chern-Simons theory and large $N$ duality}

\subsection{Quantum invariants}
In his seminal paper \cite{W1}, E. Witten introduced a new
topological invariant of a 3-manifold $M$  as a partition function
of quantum Chern-Simons theory. Let $G$ be a compact gauge group
which is a Lie group, and $M$ be an oriented three-dimensional
manifold. Let $\mathcal{A}$ be a $\mathfrak{g}$-valued connection on
$M$ where $\mathfrak{g}$ is the Lie algebra of $G$. The Chern-Simons
\cite{CS} action is given by
\begin{align*}
S(\mathcal{A})=\frac{k}{4\pi}\int_{M}Tr\left(\mathcal{A}\wedge
d\mathcal{A}+\frac{2}{3}\mathcal{A}\wedge\mathcal{A}
\wedge\mathcal{A}\right)
\end{align*}
where $k$ is an integer called the level.

Chern-Simons partition function is defined as the path integral in
quantum field theory
\begin{align*}
Z^G(M;k)=\int e^{i S(A)}D \mathcal{A}
\end{align*}
where the integral is over the space of all $\mathfrak{g}$-valued
connections $\mathcal{A}$ on $M$. Although it is not rigorous,
Witten \cite{W1} developed some techniques to calculate such
invariants.

If the three-manifold $M$ contains a link $\mathcal{L}$, we let
$\mathcal{L}$ be an $L$-component link denoted by
$\mathcal{L}=\bigsqcup _{j=1}^L\mathcal{K}_j$. Define
$$W_{R_j}(\mathcal{K}_j)=Tr_{R_j}Hol_{\mathcal{K}_j}(\mathcal{A})$$ which is the trace of holomony
along $\mathcal{K}_j$ taken in representation $R_j$. Then Witten's
invariant of the pair $(M,\mathcal{L})$ is given by
\begin{align*}
Z^{G}(M,\mathcal{L};\{R_j\};k)=\int e^{iS(\mathcal{A})}\prod_{j=1}^L
W_{R_j}(\mathcal{K}_j)D\mathcal{A}.
\end{align*}

When $M=S^3$ and the Lie algebra of $G$ is semisimple, Reshetikhin
and Turaev \cite{RT1,RT2} developed a systematic way to constructed
the above invariants by using the representation theory of quantum
groups. Their construction led to the definition of colored
HOMFLY-PT invariants  \cite{LM2,LZ}, which can be viewed as the
large $N$ limit of the quantum $U_{q}(sl_N)$ invariants. Usually, we
use the notation $W_{\lambda^1,..,\lambda^L}(\mathcal{L};q,a)$ to
denote the (framing-independent) colored HOMFLY-PT invariants for a
(oriented) link $\mathcal{L}=\bigsqcup _{j=1}^L\mathcal{K}_j$, where
each component $\mathcal{K}_j$ is colored by an irreducible
representation $V_{\lambda^j}$ of $U_{q}(sl_N)$. Some basic
structures for $W_{\lambda^1,..,\lambda^L}(\mathcal{L};q,a)$ were
proved in \cite{LP1,LP2,Zhu2}. It is difficult to obtain an explicit
formula of a given link for any irreducible representations
$\lambda$. We refer to \cite{LZ} for an explicit formula for torus
links, and a series of works due to Morozov et al \cite{MMM} and
Nawata et al \cite{NRZ} for some conjectural formulae of twist
knots. In particular, we have the following explicit formula for a
trivial knot (unknot) $U$:
\begin{align}\label{unknotformula}
W_{\lambda}(U;q,a)=\prod_{x\in
\lambda}\frac{a^{-1/2}q^{cn(x)/2}-a^{1/2}q^{-cn(x)/2}}{q^{h(x)/2}-q^{-h(x)/2}}.
\end{align}
For a box $x=(i,j)\in \lambda$, the hook length and content are
defined to be $hl(x)=\lambda_i+\lambda_j^{t}-i-j+1$ and $cn(x)=j-i$
respectively.

\subsection{Large $N$ duality} \label{sectionlargeN}
In another fundamental work of Witten \cite{W3}, the $SU(N)$
Chern-Simons gauge theory on a
three-manifold $M$ was interpreted as an open topological string theory on $%
T^*M$ with $N$ topological branes wrapping $M$ inside $T^*M$.
Furthermore, Gopakumar and Vafa \cite{GV2} conjectured that the
large $N$ limit of $SU(N)$ Chern-Simons gauge theory on $S^3$ is
equivalent to the closed topological string theory on the resolved
conifold. Furthermore, Ooguri and Vafa \cite{OV} generalized the
above construction to the case of a knot $\mathcal{K}$ in $S^3$.
They introduced the Chern-Simons partition function
$Z_{CS}^{(S^3,\mathcal{K})}(q,a,\mathbf{x})$ for $(S^3,\mathcal{K})$
which is a generating function of the colored HOMFLY-PT invariants
in all irreducible representations.
\begin{align} \label{chernsimonspartition}
Z_{CS}^{(S^3,\mathcal{K})}(q,a,\mathbf{x})=\sum_{\lambda\in
\mathcal{P}}W_{\lambda}(\mathcal{L},q,a)s_{\lambda}(\mathbf{x}).
\end{align}
Ooguri and Vafa \cite{OV} conjectured that for any knot
$\mathcal{K}$ in $S^3$, there exists a corresponding Lagrangian
submanifold $\mathcal{D}_{\mathcal{K}}$, such that the Chern-Simons
partition function
 is equal to the open topological string
partition function
 on
$(X,\mathcal{D}_{\mathcal{K}})$. They have established this duality
for the case of a trivial knot $U$ in $S^3$, and the link case was
further discussed in \cite{LMV}.

 In general,  we first
should find a way to construct the Lagrangian submanifold
$\mathcal{D}_\mathcal{L}$ corresponding to the link $\mathcal{L}$ in
geometry. See \cite{LMV,Koshkin,Tau,DSV} for the constructions for
some special links. Furthermore, if the Lagrangian submanifold
$\mathcal{D}_{\mathcal{L}}$ is constructed, then we need to compute
the open sting partition function under this geometry. For a trivial
knot in $S^3$, the dual open string partition function was computed
by J. Li and Y. Song \cite{LS} and S. Katz and C.-C.M. Liu
\cite{KL}.

On the other hand side, Aganagic and Vafa \cite{AV} introduced the
special Lagrangian submanifold in toric Calabi-Yau 3-fold which we
call Aganagic-Vafa A-brane (AV-brane) and studied its mirror
geometry, then they computed the counting of holomorphic disc end on
AV-brane by using the idea of mirror symmetry. Moreover, Aganagic
and Vafa surprisingly found the computation by using mirror symmetry
and the result from Chern-Simons knot invariants \cite{OV} are
matched. Furthermore, in \cite{AKV}, Aganagic, Klemm and Vafa
investigated the integer ambiguity appearing in the disc counting
and discovered that the corresponding ambiguity in Chern-Simons
theory was described by the framing of the knot. They checked that
the two ambiguities match for the case of the unknot, by comparing
the disk amplitudes on both sides.

Then, Mari\~no and Vafa \cite{MV} generalized the large $N$ duality
to the case of knots with arbitrary framing. They studied carefully
and established the large $N$ duality between a framed unknot in
$S^3$ and the open string theory on resolved conifold with AV-brane
by using the mathematical approach in \cite{KL}.  By comparing the
coefficient of the highest degree of the K\"{a}hler parameter in
this duality, they derived a remarkable Hodge integral identity
which now is called the Mari\~no-Vafa formula. Two mathematical
proofs for the Mari\~no-Vafa formula were given in \cite{LLZ} and
\cite{OP} respectively. We describe this duality in more details.
For a framed knot $\mathcal{K}_\tau$ with framing $\tau\in
\mathbb{Z}$, we define the framed colored HOMFLYPT invariants
$\mathcal{K}_\tau$ as follow,
\begin{align} \label{framedknotformula}
\mathcal{H}_\lambda(\mathcal{K}_\tau,q,a)=(-1)^{|\lambda|\tau}q^{\frac{\kappa_\lambda\tau}{2}}W_{\lambda}(\mathcal{K},q,a),
\end{align}
where
$\kappa_\lambda=\sum_{i=1}^{l(\lambda)}\lambda_i(\lambda_i-2i+1)$.

 The Chern-Simon partition function for
$(S^3,\mathcal{K}_\tau)$ is given by
\begin{align} \label{partitionfunctionframedunknot}
Z_{CS}^{(S^3,\mathcal{K}_\tau)}(q,a;\mathbf{x})=\sum_{\lambda\in
\mathcal{P}}\mathcal{H}_\lambda(\mathcal{K}_\tau,q,a)s_{\lambda}(\mathbf{x}).
\end{align}
We let $\hat{X}:=\mathcal{O}(-1)\oplus\mathcal{O}(-1)\rightarrow
\mathbb{P}^1$ be the resolved conifold, and $D_{\tau}$ be the
corresponding AV-brane. The open string partition function for
$(\hat{X},D_{\tau})$ has the structure
\begin{align} \label{partitonfunctionresolvedconifold}
Z_{str}^{(\hat{X},D_{\tau})}(g_s,a;\mathbf{x})=\exp\left(
-\sum_{g\geq0,\mu}\frac{\sqrt{-1}^{l(\mu)}}{|Aut(\mu)|}g_s^{2g-2+l(\mu)}F_{\mu,g}^{
\tau}(a)p_{\mu}(\mathbf{x})\right)
\end{align}
where $ F^{\tau}_{\mu,g}(a)=\sum_{Q\in
\mathbb{Z}/2}K^{\tau}_{\mu,g,Q}a^Q $ and $K_{\mu,g,Q}^{\tau}$ is the
open Gromov-Witten invariants
\begin{align*}
K_{\mu,g,Q}^{\tau}=\int_{[\mathcal{M}_{g,l(\mu)}(D^2,S^1|2Q,\mu_1,..,\mu_{l})]}e(\mathcal{V}),
\end{align*}
which is defined by S. Katz and C.-C. Liu \cite{KL}.  In particular,
when $Q=\frac{|\mu|}{2}$, the
 computations in \cite{KL} shows
\begin{align}\label{MVGW}
&K^{\tau}_{\mu,g,\frac{|\mu|}{2}}=(-1)^{|\mu|\tau}(\tau(\tau+1))^{l(\mu)-1}\\\nonumber
&\prod_{i=1}^{l(\mu)}
\frac{\prod_{j=1}^{\mu_i-1}(\mu_i\tau+j)}{(\mu_i-1)!}\int_{\overline{\mathcal{M}}_{g,l(\mu)}}
\frac{\Lambda_{g}^{\vee}(1)\Lambda_{g}^{\vee}(-\tau-1)\Lambda_g^{\vee}(\tau)}{\prod_{i=1}^{l(\mu)}(1-\mu_j\psi_j)}
\end{align}
where
$\Lambda_g^{\vee}(\tau)=\tau^g-\lambda_1\tau^{g-1}+\cdots+(-1)^g\lambda_g$.
Therefore, the large $N$ duality in this case is given by the
following identity:
\begin{align} \label{LargeNdualiyofunknot2}
Z_{CS}^{(S^3,U_\tau)}(q,a;\mathbf{x})=Z_{str}^{(\hat{X},D_{\tau})}(g_s,a;\mathbf{x})
\end{align}
where $q=e^{ig_s}$. By taking the coefficients of
$a^{\frac{|\mu|}{2}}$  of the following equality:
\begin{align*}
[p_{\mu}(\mathbf{x})g_s^{2g-2+l(\mu)}]\log
Z_{CS}^{(S^3,U_\tau)}(q,a;\mathbf{x})=[p_\mu(\mathbf{x})g_s^{2g-2+l(\mu)}]\log
Z_{str}^{(\hat{X},D_{\tau})}(g_s,a;\mathbf{x}),
\end{align*}
 we obtain the
Mari\~no-Vafa formula which is a Hodge integral identity with triple
$\lambda$-classes. The Mari\~no-Vafa formula provides a very
powerful tool to study the intersection theory of moduli space of
curves. From it, we can derive the Witten conjecture \cite{W2,Ko0},
 ELSV formula \cite{ELSV}, and various interesting Hodge integral identities,
see \cite{LLZ1,Li,Zhu1}.

Combining the idea of dualities shown above, and together with
several new technical ingredients,  Aganagic, Klemm, Mari\~no and
Vafa finally developed a systematic method, gluing the topological
vertices, to compute all loop topological string amplitudes on toric
Calabi-Yau manifolds \cite{AMV,AKMV}. The mathematical theory for
topological vertex was finally established in \cite{LLLZ}.  This
method provides an effective way to compute both the closed and open
string partition function for a toric Calabi-Yau 3-fold  with
AV-brane. Therefore, we have an explicit formula for the partition
function of resolved conifold
$Z_{str}^{(\hat{X},D_\tau)}(g_s,a,\mathbf{x})$, by comparing to the
explicit formula $Z_{CS}^{(S^3,U_\tau)}(q,a,\mathbf{x})$ of
Chern-Simons partition function describe above. Finally,  J. Zhou
proved the identity (\ref{LargeNdualiyofunknot2}) in \cite{Zhou}
based on the results in their previous works \cite{LLZ,LLZ2,LLLZ}.

\subsection{Integrality of the quantum invariants}
Now, let us collect the above discussions together. Let
$\mathcal{L}$ be a link in $S^3$, the large $N$ duality predicts
there exists a Lagrangian submanifold $\mathcal{D}_\mathcal{L}$ in
the resolved confold $\hat{X}$, and provides us the identity
(\ref{LargeNdualiyofunknot2}). Since the topological string
partition function
$Z_{str}^{(\hat{X},\mathcal{D}_\mathcal{L})}(g_s,a,\mathbf{x})$ has
the integrality structures by the discussions in Section 2.3, it
implies that the Chern-Simons partition function
$Z_{CS}^{(S^3,\mathcal{L})}(q,a,\mathbf{x})$ also inherits this
integrality structure. Usually, this integrality structure is called
the LMOV conjecture for link in \cite{LP1}. Furthermore, as
mentioned previously, the large $N$ duality was generalized to the
case of framed knot $\mathcal{K}_\tau$ \cite{MV}, where the
Chern-Simons partition $Z_{CS}^{(S^3,\mathcal{K}_\tau)}$ for framed
knot $\mathcal{K}_\tau$ is given by formula
(\ref{partitionfunctionframedunknot}). For convenience, we only
formulate the LMOV conjecture for framed knot $\mathcal{K}_\tau$ in
the following, although the conjecture should also holds for any
framed link, see \cite{LP3}.
\begin{conjecture}[LMOV conjecture for framed knot or framed LMOV conjecture]
\label{LMOVframedknot} Let
\begin{align*}
F_{CS}^{(S^3,\mathcal{K}_\tau)}(q,a,\mathbf{x})=\log
Z_{CS}^{(S^3,\mathcal{K}_\tau)}(q,a,\mathbf{x})
\end{align*}
be the Chern-Simons free energy for a framed knot $\mathcal{K}_\tau$
in $S^3$. Then there exist functions
$f_{\lambda}(\mathcal{K}_\tau;q,a)$ such that
\begin{align*}
F_{CS}^{(S^3,\mathcal{K}_\tau)}(q,a,\mathbf{x})=\sum_{d=1}^\infty\frac{1}{d}\sum_{\lambda\in
\mathcal{P},\lambda\neq
0}f_{\lambda}(\mathcal{K}_\tau;q^d,a^d)s_{\lambda}(\mathbf{x}^d).
\end{align*}
Let $
\hat{f}_{\mu}(\mathcal{K}_\tau;q,a)=\sum_{\lambda}f_{\lambda}(\mathcal{K}_\tau;q,a)M_{\lambda\mu}(q)^{-1},
$ where $M_{\lambda\mu}(q)$ is defined in the formula
(\ref{Mlambdamu}). Denote $z=q^{\frac{1}{2}}-q^{-\frac{1}{2}}$, then
for any $\mu\in \mathcal{P}^+$, there are integers
$N_{\mu,g,Q}(\tau)$ such that
\begin{align*}
\hat{f}_{\mu}(\mathcal{K}_\tau;q,a)=\sum_{g\geq
0}\sum_{Q}N_{\mu,g,Q}(\tau)z^{2g-2}a^Q\in
z^{-2}\mathbb{Z}[z^{2},a^{\pm \frac{1}{2}}].
\end{align*}
Therefore,
\begin{align*}
\mathfrak{z}_\mu\hat{g}_{\mu}(\mathcal{K}_\tau;q,a)&=\sum_{\nu}\chi_{\nu}(C_\mu)\hat{f}_\nu(\mathcal{K}_\tau;q,a)\\\nonumber
&=\sum_{g\geq 0}\sum_{Q}n_{\mu,g,Q}(\tau)z^{2g-2}a^Q\in
z^{-2}\mathbb{Z}[z^{2},a^{\pm \frac{1}{2}}].
\end{align*}
where
$n_{\mu,g,Q}(\tau)=\sum_{\nu}\chi_{\nu}(C_\mu)N_{\nu,g,Q}(\tau)$.
\end{conjecture}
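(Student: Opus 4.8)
The plan is to peel the statement into three layers and to reduce the only genuinely hard one --- the integrality --- to a single congruence about the colored HOMFLY-PT invariants of the knot. The three layers are: (a) the existence of the reformulated functions $f_\lambda(\mathcal{K}_\tau;q,a)$; (b) the symmetry and pole structure placing $\hat{f}_\mu$ in $z^{-2}\mathbb{Q}[z^2,a^{\pm 1/2}]$; and (c) the integrality $\hat{f}_\mu\in z^{-2}\mathbb{Z}[z^2,a^{\pm1/2}]$, from which the integrality of $n_{\mu,g,Q}(\tau)$ is immediate. Layer (a) is formal. Since $Z_{CS}^{(S^3,\mathcal{K}_\tau)}=\sum_\lambda\mathcal{H}_\lambda s_\lambda$ has constant term $\mathcal{H}_0=1$, its logarithm $F_{CS}$ is a well-defined power series, and writing $\psi_d$ for the Adams operation $s_\lambda(\mathbf{x})\mapsto s_\lambda(\mathbf{x}^d)$ coupled to $(q,a)\mapsto(q^d,a^d)$, Möbius inversion gives $\sum_\lambda f_\lambda s_\lambda=\sum_{d\ge 1}\frac{\mu(d)}{d}\psi_d(F_{CS})$, which determines the $f_\lambda$ uniquely in $\mathbb{Q}(q^{1/2},a^{1/2})$. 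Here the framing enters only through the explicit prefactor $(-1)^{|\lambda|\tau}q^{\kappa_\lambda\tau/2}$ of $\mathcal{H}_\lambda$ in (\ref{framedknotformula}), so this step is insensitive to the choice of knot.

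For layer (b) I would establish the parity and pole bound separately. The parity statement --- that only even powers of $z=q^{1/2}-q^{-1/2}$ occur, equivalently that $\hat{f}_\mu(q^{-1},a)=\hat{f}_\mu(q,a)$ --- follows from the interplay of three symmetries under $q\mapsto q^{-1}$: the transpose symmetry $W_{\lambda^t}(\mathcal{K};q^{-1},a)=(-1)^{|\lambda|}W_\lambda(\mathcal{K};q,a)$ of the colored HOMFLY-PT invariants established in \cite{LP1,LP2,Zhu2}; the relation $\kappa_{\lambda^t}=-\kappa_\lambda$, which makes the framing factor covariant under simultaneous $q\mapsto q^{-1}$ and $\lambda\mapsto\lambda^t$; and the behavior of the transition matrix (\ref{Mlambdamu}) under $q\mapsto q^{-1}$, where each hook factor $q^{-\nu_j/2}-q^{\nu_j/2}$ simply changes sign. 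Tracking these through the definition $\hat{f}_\mu=\sum_\lambda f_\lambda M_{\lambda\mu}^{-1}$ cancels the odd part in $z$. The pole bound (order at most $2$ at $z=0$) is a $q\to1$ asymptotic statement: the structural results of \cite{LP1,LP2,Zhu2} already control exactly this leading behavior for the unframed invariants, and the framing prefactor is regular at $q=1$, so it passes through harmlessly. Together these give $\hat{f}_\mu\in z^{-2}\mathbb{Q}[z^2,a^{\pm1/2}]$, i.e. the rational numbers $N_{\mu,g,Q}(\tau)$ are well-defined and supported in the asserted range.

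Layer (c) is the real content. The clean reduction, following the method that handled the link case \cite{LP1,LP2}, is to rephrase the integrality of $\hat{f}_\mu$ (equivalently of $\mathfrak{z}_\mu\hat{g}_\mu$) as a family of Witt/necklace-type congruences: for every prime $p$ the $p$-th Adams-graded piece $\psi_p(F_{CS})$ must agree modulo $p$ with the Frobenius image of the lower pieces. After unwinding the transition matrix this reduces to a single Frobenius congruence for the knot invariants themselves,
\begin{align*}
\mathcal{H}_\lambda(\mathcal{K}_\tau;q^p,a^p)\equiv\big(\text{$p$-th plethystic power of }\mathcal{H}_\bullet(\mathcal{K}_\tau;q,a)\big)\pmod{p},
\end{align*}
to be read in $\mathbb{Z}[q^{\pm1/2},a^{\pm1/2}]$. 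The framing twist causes no extra trouble: since $|\lambda|,\kappa_\lambda\in\mathbb{Z}$ one has $(-1)^{|\lambda|\tau}q^{\kappa_\lambda\tau p/2}\equiv\big((-1)^{|\lambda|\tau}q^{\kappa_\lambda\tau/2}\big)^p\pmod p$, so the $\tau$-dependence is automatically Frobenius-compatible and the congruence for $\mathcal{K}_\tau$ is equivalent to the one for $\mathcal{K}_0$. Granting this congruence, the necklace argument upgrades $\hat{f}_\mu$ from $\mathbb{Q}$- to $\mathbb{Z}$-coefficients, and because $n_{\mu,g,Q}(\tau)=\sum_\nu\chi_\nu(C_\mu)N_{\nu,g,Q}(\tau)$ with integer characters, the integrality of the $n_{\mu,g,Q}(\tau)$ follows at once.

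The step I expect to be the main obstacle is exactly this last Frobenius congruence for an arbitrary knot. For the framed unknot it is precisely what the product formula (\ref{unknotformula}) together with classical Lucas/binomial congruences supply --- this is the route exploited in \cite{LZhu} --- and for torus knots it follows from the Rosso--Jones cabling formula; but a general knot has no closed-form colored HOMFLY-PT, so the congruence cannot be extracted from a formula and must instead come from a structural source. The natural target is the quantum-group ($\mathcal{R}$-matrix) presentation of $\mathcal{H}_\lambda$: one would try to show that reduction modulo $p$ of the braiding intertwines with the $p$-th Adams operation on the representation ring, uniformly in $\lambda$ and compatibly with the Markov trace that closes the braid. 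Making that intertwining precise is the crux, and is the reason the framed LMOV conjecture is still open in general; the strategy above is what reduces the whole conjecture to this one clean statement, which one can then verify knot-by-knot, as we do here for $U_\tau$.
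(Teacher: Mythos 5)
The statement you are asked to prove is Conjecture \ref{LMOVframedknot}, and the paper treats it exactly as that: a conjecture. It is never proved in the paper; the unframed (framing-zero) case is attributed to K.~Liu and P.~Peng \cite{LP1}, the framed formulation to \cite{LP3}, and the paper's own contribution is to verify consequences of it for the framed unknot $U_\tau$ via explicit formulae whose integrality is established in \cite{LZhu}. Your proposal, by its own admission in the final paragraph, is not a proof either: it is a reduction of the integrality to an unproven Frobenius/plethystic congruence for the colored HOMFLY-PT invariants of an arbitrary knot, with the crucial intertwining of mod-$p$ reduction with Adams operations on the representation ring left as ``the crux.'' So the genuine gap is simply that the essential content of the statement is deferred, not established; what you have written is a (reasonable) strategy outline of the Liu--Peng style approach, not a proof, and no proof exists to compare it against.

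Two smaller points deserve flagging so the outline is not mistaken for more than it is. First, in your layer (b) you assert that the pole bound (order at most $2$ in $z$) and the polynomiality in $z^2$ with \emph{finitely many} terms ``pass through harmlessly'' from the structural results of \cite{LP1,LP2,Zhu2}. For framing zero this is part of what \cite{LP1} proves, by a nontrivial cut-and-join and cabling analysis, but for $\tau\neq 0$ it is not contained in those references: the framed statement was only formulated in \cite{LP3} and verified in special cases such as $U_\tau$ in \cite{LZhu}, so this layer is itself conjectural content, not a routine consequence. Second, your framing-compatibility congruence is stated more elaborately than needed: $q^{\kappa_\lambda\tau p/2}=\left(q^{\kappa_\lambda\tau/2}\right)^p$ holds as an identity, and the sign factor matches modulo $p$ for all $p$ (trivially for $p=2$ since $-1\equiv 1$), so the only substantive claim there is the one you already isolate, namely the congruence for the unframed invariants --- which is precisely the open problem. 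In short: the decomposition (a)/(b)/(c) is a sensible roadmap consistent with how \cite{LP1,LP3,LZhu} attack special cases, but as a proof of Conjecture \ref{LMOVframedknot} it has an unfilled hole at exactly the step where the conjecture's difficulty lives.
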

 K. Liu and P. Peng \cite{LP1} first studied the
mathematical structures of LMOV conjecture for general links without
framing contribution (i.e. as to the Chern-Simons partition
(\ref{chernsimonspartition})), which is equivalent to the framed
LMOV conjecture for any links in framing zero. They provided a proof
for this case by using cut-and-join analysis and the cabling
technique \cite{LP1}. Motivated by the work \cite{MV}, K. Liu and P.
Peng \cite{LP3} formulated the framed LMOV conjecture for any
links(as to the Chern-Simons partition function
(\ref{partitionfunctionframedunknot}). In \cite{CLPZ}, the author
together with Q. Chen, K. Liu and P. Peng,  developed the ideas in
\cite{LP3} to study the mathematical structures of framed LMOV
conjecture and discovered the new structures named congruence skein
relations for colored HOMFLY-PT invariants.

\section{LMOV invariants for framed unknot $U_\tau$}
In Section \ref{sectionlargeN}, we have showed that, for a framed
unknot $U_\tau$ in $S^3$, the large $N$ duality holds \cite{Zhou}:
\begin{align*}
Z_{CS}^{(S^3,U_\tau)}(q,a;\mathbf{x})=Z_{str}^{(\hat{X},D_{\tau})}(g_s,a;\mathbf{x}),
\  q=e^{\sqrt{-1}g_s}.
\end{align*}
So one can compute LMOV invariants completely by using the colored
HOMFLY-PT invariants of the framed unknot $U_\tau$. On the other
hand side, by using mirror symmetry, one can also compute the
partition function $Z_{str}^{(\hat{X},D_{\tau})}(g_s,a;\mathbf{x})$
from B-model. The mirror geometry information of
$(\hat{X},D_{\tau})$ is encoded in a mirror curve
$\mathcal{C}_{\hat{X}}$. The disc counting information of
$(\hat{X},D_{\tau})$ is given by the superpotential related  to the
mirror curve \cite{AV,AKV}, and this fact was proved in \cite{FL}.

Furthermore, the open Gromov-Witten invariants of higher genus with
more holes can be computed by using Eynard-Orantin topological
recursions \cite{EO1}. This approach named as BKMP conjecture, was
proposed by Bouchard, Klemm, Mari\~no and Pasquetti \cite{BKMP}, and
was fully proved in \cite{EO2,FLZ} for any toric Calabi-Yau 3-fold
with AV-brane, so one can also use the BKMP method to compute the
LMOV invariants for $(\hat{X},D_{\tau})$. To determine the mirror
curve of $(\hat{X},D_{\tau})$, there are standard methods in toric
geometry. However, in \cite{AV2}, Aganagic and Vafa proposed another
effective way to compute the mirror curve, their method can be
applied to more general large $N$ geometry of an arbitrary knot in
$S^3$ \cite{AENV}. The rest contents of this section will be
organized as follow, we first illustrate the computations of the
mirror curve of $(\hat{X},D_{\tau})$ by using the method in
\cite{AV2}. Then, we compute the explicit formulae for genus 0 LMOV
invariants.  Next, we obtain the higher genus LMOV invariants with
one hole.

\subsection{a-deformed A-polynomial as the mirror curve}
The method used in \cite{AV2} to compute the mirror curve is based
on the fact that, colored HOMFLY-PT invariants colored by a
partition with a single row is a $q$-holonomic function, this fact
was conjectured and used in many literatures, such as
\cite{FGSA,FGS}, and was finally proved in \cite{GLL}. In fact, such
idea can go back to \cite{GL}.

Now, we illustrate such computations for framed unknot $U_\tau$. We
first compute the noncommutative a-deformed $A$-polynomial (it is
called the Q-deformed A-polynomial in \cite{AV2}, the variable $Q$
in \cite{AV2} is the variable $a$ here) for $U_\tau$.

By formula (\ref{unknotformula}), the colored HOMFLY-PT invariants
colored by partition $(n)$ for the unknot $U$ is given by
\begin{align*}
W_{n}(U;q,a)=\frac{a^{\frac{1}{2}}-a^{-\frac{1}{2}}}{q^{\frac{1}{2}}-q^{-\frac{1}{2}}}\cdots
\frac{a^{\frac{1}{2}}q^{\frac{n-1}{2}}-a^{-\frac{1}{2}}q^{\frac{-n-1}{2}}}{q^{\frac{n}{2}}-q^{-\frac{n}{2}}}
\end{align*}
It gives the recursion
\begin{align*}
(q^{n+1}-1)W_{n+1}(U;q,a)-(a^{\frac{1}{2}}q^{n+\frac{1}{2}}-a^{-\frac{1}{2}}q^{\frac{1}{2}})W_n(U;q,a)=0.
\end{align*}
By formula (\ref{framedknotformula}), the framed colored HOMFLY-PT
invariants for the framed unknot with framing $\tau\in \mathbb{Z}$
is
\begin{align*}
\mathcal{H}_n(U_\tau;q,a)=(-1)^{n\tau}q^{\frac{n(n-1)}{2}\tau}W_n(U;q,a).
\end{align*}
Then we obtain the recursion for $\mathcal{H}_n(U_\tau;q,a)$ as
follow
\begin{align} \label{frameunknotrecursion}
(-1)^\tau(q^{n+1}-1)\mathcal{H}_{n+1}(U_\tau;q,a)-(a^{\frac{1}{2}}q^{n+\frac{1}{2}}-a^{-\frac{1}{2}}q^{\frac{1}{2}})q^{n\tau}\mathcal{H}_n(U_\tau;q,a)=0.
\end{align}

For a general series $\{\mathcal{H}_n(q,a)\}_{n\geq 0}$,  we
introduce two operators $M$ and $L$ act on
$\{\mathcal{H}_n(q,a)\}_{n\geq 0}$ as follow:
\begin{align*}
M\mathcal{H}_n=q^{n}\mathcal{H}_n,\
L\mathcal{H}_{n}=\mathcal{H}_{n+1}.
\end{align*}
then $LM=qML$.
\begin{definition}
The noncommutative a-deformed A-polynomial for series
$\{\mathcal{H}_n(q,a)\}_{n\geq 0}$ is a polynomial
$\hat{A}(M,L;q,a)$ of operators $M, L$, such that
\begin{align*}
\hat{A}(M,L;q,a)\mathcal{H}_n(q,a)=0, \text{for } \ n\geq 0.
\end{align*}
and $A(M,L;a)=\lim_{q\rightarrow 1}\hat{A}(M,L;q,a)$ is called the
a-deformed A-polynomial.
\end{definition}
Therefore, from the recursion  (\ref{frameunknotrecursion}),  we
obtain the noncommutative a-deformed A-polynomial for $U_\tau$ as
follow:
\begin{align*}
\hat{A}_{U_\tau}(M,L,q;a)=(-1)^\tau(qM-1)L-M^{\tau}(a^{\frac{1}{2}}q^{\frac{1}{2}}M-a^{-\frac{1}{2}}q^{\frac{1}{2}}).
\end{align*}
And the a-deformed A-polynomial is given by
\begin{align*}
A_{U_\tau}(M,L;a)=\lim_{q\rightarrow
1}\hat{A}(M,L,q;a)=(-1)^\tau(M-1)L-M^{\tau}(a^{\frac{1}{2}}M-a^{-\frac{1}{2}}).
\end{align*}

In order to get the mirror curve of $U_\tau$, we need the following
general result which is written in the following lemma. Let
$Z(x)=\sum_{k\geq 0}\mathcal{H}_k(q,a)x^k$ be a generating function
of the series $\{\mathcal{H}_k(q,a)|k\geq 0\}$. We also introduce
two operators $\hat{x},\hat{y}$ act on $Z(x)$ as follow:
\begin{align*}
\hat{x}Z(x)=xZ(x), \ \hat{y}Z(x)=Z(qx).
\end{align*}
then $\hat{y}\hat{x}=q\hat{x}\hat{y}$. It is easy to obtain the
following result (see Lemma 2.1 in \cite{GKS} for the similar
statement).
\begin{proposition}
Given a noncommutative A-polynomial
$\hat{A}(M,L,q,a)=\sum_{i,j}c_{i,j}M^iL^j$ for the series
$\{\mathcal{H}_k(q,a)|k\geq 0\}$, then we have
\begin{align} \label{ApolynomialforZ}
\hat{A}(\hat{y},\hat{x}^{-1},q,a)Z(x)=\sum_{i,j}\sum_{-j\leq k\leq
-1}\mathcal{H}_{k+j}q^{ki}x^k.
\end{align}
\end{proposition}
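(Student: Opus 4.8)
The plan is to prove the identity monomial-by-monomial: expand $\hat{A}(\hat{y},\hat{x}^{-1},q,a)=\sum_{i,j}c_{i,j}\hat{y}^i\hat{x}^{-j}$, apply each operator $\hat{y}^i\hat{x}^{-j}$ to the generating series $Z(x)=\sum_{k\ge 0}\mathcal{H}_k x^k$, and then reassemble. First I would compute the action of a single monomial. Since $\hat{x}^{-1}$ multiplies by $x^{-1}$ and $\hat{y}$ substitutes $x\mapsto qx$, a direct computation gives $\hat{y}^i\hat{x}^{-j}Z(x)=(q^i x)^{-j}Z(q^i x)=\sum_{p\ge 0}\mathcal{H}_p\,q^{i(p-j)}x^{\,p-j}$. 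The crucial bookkeeping step is to reindex this Laurent series by $k=p-j$, so that $\hat{y}^i\hat{x}^{-j}Z(x)=\sum_{k\ge -j}\mathcal{H}_{k+j}\,q^{ik}x^{k}$, and then split the sum at $k=0$ into a polar part with $-j\le k\le -1$ and a holomorphic part with $k\ge 0$.

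Next I would sum over $(i,j)$ against the coefficients $c_{i,j}$ and treat the two parts separately. For the holomorphic part, interchanging the order of summation yields $\sum_{k\ge 0}x^{k}\big(\sum_{i,j}c_{i,j}\,q^{ik}\mathcal{H}_{k+j}\big)$, and I would recognize the inner sum as exactly $\hat{A}(M,L,q,a)\mathcal{H}_k$ under the operator action $M^iL^j\mathcal{H}_k=q^{ki}\mathcal{H}_{k+j}$. By the defining property of the noncommutative a-deformed $A$-polynomial, $\hat{A}(M,L,q,a)\mathcal{H}_k=0$ for every $k\ge 0$, so the entire holomorphic part vanishes. What remains is precisely the polar part $\sum_{i,j}c_{i,j}\sum_{-j\le k\le -1}\mathcal{H}_{k+j}q^{ki}x^{k}$, which is the right-hand side of (\ref{ApolynomialforZ}).

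The step I expect to be the main obstacle is matching the two conventions for how a monomial acts, namely verifying that the inner sum $\sum_{i,j}c_{i,j}q^{ik}\mathcal{H}_{k+j}$ really equals $\hat{A}(M,L)\mathcal{H}_k$ rather than differing by a spurious power of $q$. The point is that the operator product $M^iL^j$ must be read so that $M$ records the index $k$ of $\mathcal{H}_k$ before $L^j$ shifts it (this is the convention under which the recursion (\ref{frameunknotrecursion}) produced $\hat{A}_{U_\tau}$), whereas on the generating-function side the factor $q^{ik}$ arises because $\hat{y}^i$ rescales $x^{k}$ after $\hat{x}^{-j}$ has lowered the degree. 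I would make this compatibility explicit by checking the single identity $\hat{y}^i\hat{x}^{-j}Z(x)=q^{-ij}x^{-j}\,\hat{y}^i Z(x)$ directly from the definitions of $\hat{x},\hat{y}$, which pins down the $q$-exponents unambiguously and guarantees that the holomorphic part collapses to the $A$-polynomial relation. Once this is settled, the remaining manipulations are routine reindexing, and convergence issues are immaterial since everything is an identity of formal Laurent series in $x$.
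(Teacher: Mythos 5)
Your proposal is correct and follows essentially the same route as the paper's own proof: expand $\hat{A}(\hat{y},\hat{x}^{-1},q,a)$ into monomials, compute $\hat{y}^i\hat{x}^{-j}Z(x)=q^{-ij}x^{-j}Z(q^ix)$, reindex by $k=n-j$, split into the $k\geq 0$ and $-j\leq k\leq -1$ parts, and kill the former using $\sum_{i,j}c_{i,j}q^{ki}\mathcal{H}_{k+j}=0$ for $k\geq 0$. Your explicit attention to the operator-ordering convention (so that $M^iL^j\mathcal{H}_k=q^{ki}\mathcal{H}_{k+j}$, consistent with $LM=qML$ and the recursion (\ref{frameunknotrecursion})), and your retention of the coefficient $c_{i,j}$ in the polar part, are welcome clarifications of details the paper's proof passes over silently.
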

\begin{proof}
Since
\begin{align*}
\hat{A}(\hat{y},\hat{x}^{-1},q,a)Z(x)&=\sum_{i,j}c_{i,j}\hat{y}^i\hat{x}^{-j}Z(x)
\\\nonumber
&=\sum_{i,j}c_{i,j}q^{-ij}x^{-j}Z(q^ix)\\\nonumber
&=\sum_{i,j}c_{i,j}\sum_{n\geq
0}\mathcal{H}_nq^{(n-j)i}x^{n-j}\\\nonumber
&=\sum_{i,j}c_{i,j}\sum_{k\geq 0}\mathcal{H}_{k+j}q^{k
i}x^k+\sum_{i,j}\sum_{-j\leq k\leq -1}a_{k+j}q^{ki}x^k.
\end{align*}
and by the definitions of the operators $M,L$,
$\hat{A}(M,L,q,a)\mathcal{H}_k=0$ gives
\begin{align*}
\sum_{i,j}c_{i,j}q^{ki}\mathcal{H}_{k+j}=0, \text{for} \ k\geq 0.
\end{align*}
We obtain the formula (\ref{ApolynomialforZ}).
\end{proof}
Finally, the mirror curve is given by
\begin{align*}
A(y,x^{-1};a)=\lim_{q\rightarrow
1}\hat{A}(\hat{y},\hat{x}^{-1};q,a)=0.
\end{align*}
Therefore, in our case, the mirror curve is:
\begin{align} \label{mirrorcurve2}
A_{U_\tau}(y,x^{-1};a)=y-1-a^{-\frac{1}{2}}(-1)^\tau
xy^{\tau}(ay-1)=0.
\end{align}

\subsection{Disc countings}
For convenience, we let $X=a^{-\frac{1}{2}}(-1)^\tau x$, and
$Y=1-y$, then the mirror curve (\ref{mirrorcurve2}) is changed to
the functional equation
\begin{align}  \label{mirrorcurvesimple}
Y=X(1-Y)^\tau(1-a(1-Y)).
\end{align}
In order to solve the above equation, we introduce the following
Lagrangian inversion formula \cite{Stanley}.
\begin{lemma}
Let $\phi(\lambda)$ be an invertible formal power series in the
indeterminate $\lambda$. Then the functional equation $Y=X\phi(Y)$
has a unique formal power series solution $Y=Y(X)$. Moreover, if $f$
is a formal power series, then
\begin{align} \label{lagrangeinversion}
f(Y(X))=f(0)+\sum_{n\geq
1}\frac{X^n}{n}\left[\frac{df(\lambda)}{d\lambda}\phi(\lambda)^{n}\right]_{\lambda^{n-1}}
\end{align}
\end{lemma}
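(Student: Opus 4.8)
The plan is to prove the two assertions separately: first the existence and uniqueness of the solution $Y(X)$, and then the coefficient identity (\ref{lagrangeinversion}) via the calculus of formal residues. Throughout I write $[\lambda^{m}]g$ for the coefficient of $\lambda^m$ in a series $g$, which is the paper's $[g]_{\lambda^m}$. For existence and uniqueness: since $\phi$ is invertible we have $\phi(0)\neq 0$, so writing $Y=\sum_{k\geq 1}y_kX^k$ and substituting into $Y=X\phi(Y)$ yields $y_{n}=[X^{n-1}]\phi\!\left(\sum_{k\geq 1}y_kX^k\right)$, whose right-hand side involves only $y_1,\dots,y_{n-1}$. Hence the $y_n$ are determined uniquely by induction on $n$, with $y_1=\phi(0)\neq 0$. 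Thus $Y=Y(X)$ exists, is unique, and satisfies $Y(0)=0$ and $Y'(0)=\phi(0)\neq 0$; in particular $Y(X)$ is a compositional unit, with compositional inverse $X=X(Y)=Y/\phi(Y)$ (a power series, since $\phi(Y)$ has nonzero constant term). Because $Y(0)=0$, the composite $f(Y(X))$ is a well-defined formal power series whose constant term is $f(Y(0))=f(0)$, accounting for the $f(0)$ summand.

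For $n\geq 1$ I would extract coefficients with the formal residue $\mathrm{Res}_X\sum_k c_kX^k:=c_{-1}$, writing $[X^n]f(Y(X))=\mathrm{Res}_X\bigl(X^{-n-1}f(Y(X))\bigr)$. The two tools are (i) the vanishing of the residue of a derivative, $\mathrm{Res}_X(g'(X))=0$, which gives integration by parts $\mathrm{Res}_X(u'v)=-\mathrm{Res}_X(uv')$; and (ii) the change-of-variables rule $\mathrm{Res}_Y\Omega(Y)=\mathrm{Res}_X\bigl(\Omega(Y(X))\,Y'(X)\bigr)$, valid precisely because $Y(X)$ has zero constant term and nonzero linear term. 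Applying (i) with $u=X^{-n}$ (so $X^{-n-1}=-\tfrac1n u'$) and $v=f(Y(X))$ converts the residue into $\tfrac1n\mathrm{Res}_X\bigl(f'(Y(X))\,Y'(X)\,X^{-n}\bigr)$.

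Next I would apply (ii) with $\Omega(Y)=f'(Y)\,X(Y)^{-n}$ to pass to the $Y$-residue $\tfrac1n\mathrm{Res}_Y\bigl(f'(Y)\,X(Y)^{-n}\bigr)$. Substituting $X(Y)=Y/\phi(Y)$, so that $X(Y)^{-n}=\phi(Y)^n/Y^n$, gives
\[
[X^n]f(Y(X))=\frac1n\,\mathrm{Res}_Y\frac{f'(Y)\,\phi(Y)^n}{Y^n}=\frac1n\,[\lambda^{n-1}]\bigl(f'(\lambda)\,\phi(\lambda)^n\bigr),
\]
which is exactly the coefficient of $X^n$ in (\ref{lagrangeinversion}). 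Combining this with the $n=0$ term $f(0)$ and summing over $n\geq 0$ yields the stated identity.

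I expect the only delicate point to be the rigorous, purely algebraic justification of the two residue identities in the ring of formal Laurent series, especially the change-of-variables rule (ii), whose proof reduces to the monomial case $\Omega(Y)=Y^m$ and there uses $Y'(0)=\phi(0)\neq 0$ to ensure $Y(X)$ is invertible; once these are established the remaining computation is routine. An alternative route, should one prefer to avoid residues, is to verify (\ref{lagrangeinversion}) first for $f(\lambda)=\lambda^m$ by a direct recursion and then extend by linearity, but the residue argument is cleaner and treats all $f$ uniformly.
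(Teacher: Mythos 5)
Your argument is correct. Note that the paper does not prove this lemma at all --- it is quoted verbatim from Stanley's \emph{Enumerative Combinatorics} (the paper's reference \cite{Stanley}), and the formal-residue proof you outline is essentially the standard argument given there: uniqueness by induction on coefficients, then $[X^n]f(Y(X))=\mathrm{Res}_X\bigl(X^{-n-1}f(Y(X))\bigr)$, integration by parts, and the change-of-variables rule for residues under a compositional unit. The one point you flag as delicate is indeed the only one: in the monomial reduction of rule (ii) the case $\Omega(Y)=Y^{-1}$ is where $Y'(0)=\phi(0)\neq 0$ is genuinely needed (one checks $\mathrm{Res}_X(Y'(X)/Y(X))=1$ by inspecting the leading term), while all other powers follow from $\mathrm{Res}$ of a derivative being zero; with that supplied your proof is complete.
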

\begin{remark}
In the following, we will frequently use the binomial coefficient
$\binom{n}{k}$ for all $n\in \mathbb{Z}$. That means for $n<0$, we
define $\binom{n}{k}=(-1)^k\binom{-n+k-1}{k}$.
\end{remark}

In our case, we take $ \phi(Y)=(1-Y)^\tau(1-a(1-Y)). $ Let
$f(Y)=1-Y$, by formula (\ref{lagrangeinversion}), we obtain
\begin{align*}
y(X)=1-Y(X)=1+\sum_{n\geq 1}\frac{X^n}{n}\sum_{i\geq
0}(-1)^{n+i}\binom{n}{i}\binom{n\tau +i}{n-1}a^i
\end{align*}
since $\phi(\lambda)^n$ has the expansion
\begin{align*}
\phi(\lambda)^n&=(1-\lambda)^{n\tau}(1-a(1-\lambda))^n\\\nonumber
&=\sum_{i\geq 0}\binom{n}{i}(-a)^i(1-\lambda)^{n\tau+i} \\\nonumber
&=\sum_{i,j\geq
0}\binom{n}{i}(-1)^{i+j}\binom{n\tau+i}{j}a^i\lambda^j.
\end{align*}
Moreover, if we let $f(Y(X))=\log(1-Y(X))$, then
\begin{align*}
\left[\frac{df(\lambda)}{d\lambda}\phi(\lambda)^{n}\right]_{\lambda^{n-1}}&=\sum_{i\geq
0}(-1)^{i}\binom{n}{i}\sum_{j=0}^{n-1}(-1)^{j+1}\binom{n\tau+i}{j}a^{i}\\\nonumber
&=\sum_{i\geq
0}(-1)^{i}\binom{n}{i}(-1)^{n}\binom{n\tau+i-1}{n-1}a^{i}
\end{align*}
where we have used the combinatoric identity:
\begin{align*}
\sum_{j=0}^{n-1}(-1)^{j+1}\binom{m}{j}=(-1)^n\binom{m-1}{n-1}.
\end{align*}
Formula (\ref{lagrangeinversion}) gives
\begin{align*}
\log(y(X))=\log(1-Y(X))=\sum_{n\geq 1}\frac{X^n}{n}\sum_{i\geq
0}(-1)^{n+i}\binom{n}{i}\binom{n\tau+i-1}{n-1}a^{i}.
\end{align*}
i.e.
\begin{align*}
\log(y(x))=\sum_{n\geq 1}\frac{x^n}{n}\sum_{i\geq
0}(-1)^{n\tau+n+i}\binom{n}{i}\binom{n\tau+i-1}{n-1}a^{i-\frac{n}{2}}.
\end{align*}
By BKMP's construction in genus 0 with one hole, one obtains
\begin{align} \label{disccoutingformulaunknot}
F_{(0,1)}&=\int \log(y(x))\frac{dx}{x}\\\nonumber &=\sum_{n\geq
1}\frac{x^n}{n^2}\sum_{i\geq
0}(-1)^{n\tau+n+i}\binom{n}{i}\binom{n\tau+i-1}{n-1}a^{i-\frac{n}{2}}.
\end{align}
By formula (\ref{disccoutingformula}), and if we let
$n_{m,l}(\tau)=n_{m,0,l-\frac{m}{2}}(\tau)$, then
\begin{align} \label{disccoutingformula2}
F_{(0,1)}=-\sum_{m\geq
1}\sum_{d|m,d|l}d^{-2}n_{\frac{m}{d},\frac{l}{d}}(\tau)x^ma^{l-\frac{m}{2}}.
\end{align}
Set
\begin{align*}
c_{m,l}(\tau)=-\frac{(-1)^{m\tau+m+l}}{m^2}\binom{m}{l}\binom{m\tau+l-1}{m-1},
\end{align*}
by comparing the coefficients of $x^ma^{l-\frac{m}{2}}$ in
(\ref{disccoutingformula2}) and (\ref{disccoutingformulaunknot}), we
have
\begin{align*}
c_{m,l}(\tau)=\sum_{d|m,d|l}\frac{n_{m/d,l/d}(\tau)}{d^2}.
\end{align*}
By M\"{o}bius inversion formula, we obtain
\begin{align} \label{formula-disccounting}
n_{m,l}(\tau)=\sum_{d|m,d|l}\frac{\mu(d)}{d^2}c_{\frac{m}{d},\frac{l}{d}}(\tau).
\end{align}
In \cite{LZhu}, we prove that $n_{m,l}(\tau)\in \mathbb{Z}$ by using
the basic method of number theory. Recently, Panfil and Sulkowski
\cite{PS} generalized the above disc counting formula
(\ref{formula-disccounting}) to a class of toric Calabi-Yau
manifolds without compact four-cycles which is also referred to as
strip geometries. In our notations, their formula (cf. formula
(4.19) in \cite{PS}) can be formulated as follow.

Given two integers $r,s\geq 0$, set $\mathbf{l}=(l_1,...,l_r)$,
$\mathbf{k}=(k_1,...,k_s)$, and $|\mathbf{l}|=\sum_{j=1}^rl_j$,
$|\mathbf{k}|=\sum_{j=1}^sk_j$. We define
\begin{align*}
c_{m,\mathbf{l},\mathbf{k}}(\tau)=\frac{(-1)^{m(\tau+1)+|\mathbf{l}|}}{m^2}
\binom{m\tau+|\mathbf{l}|+|\mathbf{k}|-1}{m-1}\prod_{j=1}^{r}\binom{m}{l_j}\prod_{j=1}^s
\frac{m}{m+k_j}\binom{m+k_j}{k_j}.
\end{align*}
Then, we have the disc counting formula
\begin{align} \label{formula-generaldiscounting}
n_{m,\mathbf{l},\mathbf{k}}(\tau)
=\sum_{d|\text{gcd}(m,\mathbf{l},\mathbf{k})}\frac{\mu(d)}{d^2}c_{m/d,\mathbf{l}/d,\mathbf{k}/d}(\tau)
\end{align}
It is obvious that formula (\ref{formula-disccounting}) is just the
special case of (\ref{formula-generaldiscounting}) by taking $r=1$
and $s=0$.

In \cite{Zhu5}, we will generalize the method used in \cite{LZhu} to
show that $n_{m,\mathbf{l},\mathbf{k}}(\tau)\in \mathbb{Z}$.

\subsection{Annulus counting}
The  Bergmann kernel of the curve (\ref{mirrorcurvesimple}) is
\begin{align*}
B(X_1,X_2)=\frac{dY_1dY_2}{(Y_1-Y_2)^2}.
\end{align*}
By the construction of BKMP \cite{BKMP}, the annulus amplitude is
calculated by the integral
\begin{align*}
\int
\left(B(X_1,X_2)-\frac{dX_1dX_2}{(X_1-X_2)^2}\right)=\ln\left(\frac{Y_2(X_2)-Y_1(X_1)}{X_2-X_1}\right)
\end{align*}

More precisely, for $m_1,m_2\geq 1$, the coefficients
$\left[\ln\left(\frac{Y_2(X_2)-Y_1(X_1)}{X_2-X_1}\right)\right]_{x_1^{m_1}x_2^{m_2}a^l}$
gives the annulus Gromov-Witten invariants $K_{(m_1,m_2),0,l}$.

Let
$b_{n,i}=\frac{(-1)^{n+i}}{n+1}\binom{n+1}{i}\binom{(n+1)\tau+i}{n}$
and $b_n=\sum_{i\geq 0}b_{n,i}a^i$. In particular $b_0=1-a$. Then
\begin{align*}
Y(X)=\sum_{n\geq 1}b_nX^n.
\end{align*}
and
\begin{align*}
\frac{Y_2(X_2)-Y_1(X_1)}{X_2-X_1}=(1-a)+\sum_{n\geq
1}b_n\left(\sum_{i=0}^{n}X_1^iX_2^{n-i}\right).
\end{align*}
Let $\tilde{b}_{m,l}=\sum_{i=0}^lb_{m,i}$  and
$\tilde{b}_{m}=\sum_{l=0}\tilde{b}_{m,l}a^l$. For $m_1\geq 1,
m_2\geq 1$, the coefficients $c_{(m_1,m_2)}$ of
$[X_1^{m_1}X_2^{m_2}]$ in the expansion
\begin{align*}
\ln\left(1+\sum_{n\geq
1}\tilde{b}_n\left(\sum_{i=0}^{n}X_1^iX_2^{n-i}\right)\right)
\end{align*}
is given by
\begin{align*}
c_{(m_1,m_2)}=\sum_{|\mu|=m_1+m_2}\frac{(-1)^{l(\mu)-1}(l(\mu)-1)!\tilde{b}_\mu}{|Aut(\mu)|}|S_{\mu}(m_1)|
\end{align*}
where $S_{\mu}(m_1)$ is the set
\begin{align*}
S_\mu(m_1)=\{(i_1,...,i_{l(\mu)})\in
\mathbb{Z}^{l(\mu)}|\sum_{k=1}^{l(\mu)}i_k=m_1,\ \text{where}\ 0\leq
i_k\leq \mu_k, \ \text{for} \ k=1,..,l(\mu)\},
\end{align*}
by this definition,  $S_\mu(m_1)=S_{\mu}(m_2)$.

We write $c_{(m_1,m_2)}=\sum_{l\geq 0}c_{(m_1,m_2),l}a^l$, then the
annulus amplitude is
\begin{align*}
F_{(0,2)}=\sum_{m_1\geq 1,m_2\geq 1}\sum_{l\geq
0}(-1)^{(m_1+m_2)\tau}c_{(m_1,m_2),l}a^{l-\frac{m_1+m_2}{2}}x_1^{m_1}x_{2}^{m_2}.
\end{align*}

Set $n_{(m_1,m_2),l}=n_{(m_1,m_2),0,l-\frac{m_1+m_2}{2}}$, the
multiple covering formula (\ref{multipecoveringlhole}) for $l=2$
gives
\begin{align*}
F_{(0,2)}=\sum_{m_1\geq 1,m_2\geq 1}\sum_{l\geq
0}\sum_{d|m_1,d|m_2,d|l}\frac{1}{d}n_{(\frac{m_1}{d},\frac{m_2}{d}),\frac{l}{d}}a^{l-\frac{m_1+m_2}{2}}x_1^{m_1}x_{2}^{m_2}
\end{align*}
we have
\begin{align*}
(-1)^{(m_1+m_2)\tau}c_{(m_1,m_2),l}=\sum_{d|m_1,d|m_2,d|l}\frac{1}{d}n_{(\frac{m_1}{d},\frac{m_2}{d}),\frac{l}{d}}.
\end{align*}
so
\begin{align*}
n_{(m_1,m_2),l}=\sum_{d|m_1,d|m_2,d|l}\frac{\mu(d)}{d}(-1)^{\frac{(m_1+m_2)\tau}{d}}c_{(\frac{m_1}{d},\frac{m_2}{d}),\frac{l}{d}}.
\end{align*}
In particular, when $l=\frac{m_1+m_2}{2}$, we only need to consider
the the curve $Y=X(1-Y)^\tau$.  With the help of the following
formula proved in \cite{Zhu1}
\begin{lemma} [Lemma 2.3 of \cite{Zhu1}]
\begin{align}
\ln\left(\frac{Y_{1}(X_1)-Y_{2}(X_2)}{X_{1}-X_{2}}\right)&=\sum_{m_1,m_2
\geq
1}\frac{1}{m_1+m_2}\binom{m_1\tau+m_1-1}{m_1}\binom{m_2\tau+m_2}{m_2}X_1^{m_1}X_2^{m_2}\\\nonumber
&-\tau\left(\ln(1-Y_{1}(X_1))+\ln(1-Y_{2}(X_2))\right).
\end{align}
\end{lemma}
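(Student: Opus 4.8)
The plan is to differentiate the claimed identity with the Euler operator $E=X_1\partial_{X_1}+X_2\partial_{X_2}$, which cancels the factor $\tfrac{1}{m_1+m_2}$ and collapses the symmetric double sum into a product of two one-variable generating series that can be summed in closed form. First I would record that, since $Y(X)=X+O(X^2)$ is analytic near the origin, the divided difference $\frac{Y(X_1)-Y(X_2)}{X_1-X_2}$ is analytic in $(X_1,X_2)$ near $(0,0)$ with value $Y'(0)=1$ on the diagonal; hence $\ln\frac{Y_1-Y_2}{X_1-X_2}$ is a genuine power series in $X_1,X_2$ with vanishing constant term. Because $E(X_1^{m_1}X_2^{m_2})=(m_1+m_2)X_1^{m_1}X_2^{m_2}$, the operator $E$ is injective on power series without constant term, so it suffices to check that $E$ applied to the two sides agrees and that both sides vanish at $X_1=X_2=0$.

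Applying $E$ to the right-hand side is where the structure of the formula is used: the weight $\tfrac{1}{m_1+m_2}$ is exactly removed, giving
\begin{align*}
E\left[\sum_{m_1,m_2\geq1}\frac{1}{m_1+m_2}\binom{m_1\tau+m_1-1}{m_1}\binom{m_2\tau+m_2}{m_2}X_1^{m_1}X_2^{m_2}\right]=\mathcal{A}(X_1)\,\mathcal{B}(X_2),
\end{align*}
where $\mathcal{A}(X)=\sum_{m\geq1}\binom{m\tau+m-1}{m}X^m$ and $\mathcal{B}(X)=\sum_{m\geq1}\binom{m\tau+m}{m}X^m$. I would evaluate these by the Lagrange--B\"urmann form $\sum_{m\geq0}X^m[\lambda^m]\{H(\lambda)\phi(\lambda)^m\}=\frac{H(Y)}{1-X\phi'(Y)}$ attached to the functional equation $Y=X\phi(Y)$ for $Y(X)$; taking $H=1$ and $H=(1-\lambda)^{-1}$ produces the closed forms
\begin{align*}
\mathcal{A}(X)=\frac{\tau Y}{1-(\tau+1)Y},\qquad \mathcal{B}(X)=\frac{(\tau+1)Y}{1-(\tau+1)Y}.
\end{align*}
Together with $X\partial_X\ln(1-Y)=\frac{-Y}{1-(\tau+1)Y}$ (again from the functional equation), the full $E$-image of the right-hand side becomes an explicit rational function of $Y_1,Y_2$ that simplifies to $\frac{\tau[(Y_1+Y_2)-(\tau+1)Y_1Y_2]}{(1-(\tau+1)Y_1)(1-(\tau+1)Y_2)}$.

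For the left-hand side I would compute $E\ln\frac{Y_1-Y_2}{X_1-X_2}$ directly. Using $\partial_{X_i}\ln\frac{Y_1-Y_2}{X_1-X_2}=\frac{\pm Y_i'}{Y_1-Y_2}\mp\frac{1}{X_1-X_2}$, the two $\frac{1}{X_1-X_2}$ contributions combine to $-1$ and one is left with $\frac{X_1Y_1'-X_2Y_2'}{Y_1-Y_2}-1$. The crucial input is that the functional equation forces $XY'(X)=G(Y)$ to depend on $Y$ alone, namely $G(Y)=\frac{Y(1-Y)}{1-(\tau+1)Y}$; hence $E(\mathrm{LHS})=\frac{G(Y_1)-G(Y_2)}{Y_1-Y_2}-1$ is the divided difference of a rational function, and a short computation of this divided difference yields \emph{exactly} the same rational function of $Y_1,Y_2$ obtained for the right-hand side. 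Since both sides are power series in $X_1,X_2$ with zero constant term and equal $E$-images, injectivity of $E$ finishes the proof.

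The main obstacle is bookkeeping rather than conceptual: one must (i) sum the two Fuss--Catalan type series $\mathcal{A},\mathcal{B}$ in closed form, for which the Lagrange--B\"urmann denominator $1-X\phi'(Y)=\frac{1-(\tau+1)Y}{1-Y}$ is the key simplification, and (ii) verify the algebraic identity of the two divided-difference expressions. Two points are worth flagging. The consistency of the $-\tau(\ln(1-Y_1)+\ln(1-Y_2))$ boundary terms (obtained by setting $X_2=0$) pins down the precise convention of the defining equation: one needs $X=Y(1-Y)^{\tau}$, equivalently $Y/X=(1-Y)^{-\tau}$, for the stated signs, so in matching with the curve of Section 4 one should track the sign of the framing carefully. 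And the single remaining constant is handled by noting that both sides vanish at $X_1=X_2=0$, so that matching under $E$ upgrades to genuine equality.
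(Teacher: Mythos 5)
Your proof is correct; I verified the key computations. With the convention $Y=X(1-Y)^{-\tau}$ (equivalently $X=Y(1-Y)^{\tau}$) one indeed gets $1-X\phi'(Y)=\frac{1-(\tau+1)Y}{1-Y}$, hence $\mathcal{A}=\frac{\tau Y}{1-(\tau+1)Y}$, $\mathcal{B}=\frac{(\tau+1)Y}{1-(\tau+1)Y}$, $XY'=G(Y)=\frac{Y(1-Y)}{1-(\tau+1)Y}$, and both $E$-images reduce to $\frac{\tau[(Y_1+Y_2)-(\tau+1)Y_1Y_2]}{(1-(\tau+1)Y_1)(1-(\tau+1)Y_2)}$; injectivity of $E$ on series with no constant term then closes the argument. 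Your route is genuinely different from the paper's: the paper offers no proof at all, merely citing Lemma 2.3 of \cite{Zhu1}, where the identity is extracted from the Lagrange-inversion/cut-and-join machinery surrounding the Mari\~no--Vafa formula. Your Euler-operator argument is self-contained and arguably cleaner --- it isolates the single structural fact that $XY'$ is a rational function of $Y$ alone, and reduces the two-variable identity to a one-line divided-difference computation, at the cost of having to sum the two Fuss--Catalan series in closed form (which Lagrange--B\"urmann does for free). Your flag about the sign of the framing is also well taken and worth keeping: the lemma as stated is self-consistent only for $X=Y(1-Y)^{\tau}$ (setting $X_2=0$ forces $\ln(Y_1/X_1)=-\tau\ln(1-Y_1)$), whereas Section 4.3 of the paper writes the curve as $Y=X(1-Y)^{\tau}$; one of the two carries an implicit $\tau\mapsto-\tau$, and any reader reproducing the annulus invariants needs to track that.
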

We obtain
\begin{align} \label{annuluscountingunknot}
c_{(m_1,m_2),\frac{m_1+m_2}{2}}(\tau)&=\frac{1}{m_1+m_2}\binom{m_1\tau+m_1-1}{m_1}\binom{m_2\tau+m_2}{m_2}.
\end{align}

For brevity, we let
$n_{(m_1,m_2)}(\tau):=n_{(m_1,m_2),\frac{m_1+m_2}{2}}(\tau)$ which
is defined through formula (\ref{annuluscountingunknot}). Then we
obtain
\begin{align*}
    n_{(m_1,m_2)}(\tau)& =  \frac{1}{m_1+m_2}\sum_{d\mid m_1,d\mid m_2} \mu(d) (-1)^{(m_1+m_2)(\tau+1)/d} \nonumber \\
                     & \cdot
                     \binom{(m_1\tau+m_1)/d-1}{m_1/d}\binom{(m_2\tau+m_2)/d}{m_2/d}.
\end{align*}
In \cite{LZhu}, we have also proved that $n_{(m_1,m_2)}(\tau)\in
\mathbb{Z}$ for any $m_1,m_2\geq 1$ and $\tau\in \mathbb{Z}$.

\subsection{Genus $g=0$ with more  holes}
By formula (\ref{MVGW}), we have
\begin{align*}
K^\tau_{\mu,g,\frac{|\mu|}{2}}&=(-1)^{|\mu|\tau}[\tau(\tau+1)]^{l(\mu)-1}\prod_{i=1}^{l(\mu)}
\frac{\prod_{a=1}^{\mu_{i}-1}(\mu_{i}\tau+a)}{(\mu_{i}-1)!}\int_{\overline{\mathcal{M}}_{g,l(\mu)}}
\frac{\Gamma_{g}(\tau)}{\prod_{i=1}^{l(\mu)}(1-\mu_{i}\psi_{i})}\\\nonumber
&=(-1)^{|\mu|\tau}[\tau(\tau+1)]^{l(\mu)-1}\prod_{i=1}^{l(\mu)}\binom{\mu_i(\tau+1)-1}{\mu_i-1}\sum_{b_i\geq
0}\prod_{i=1}^{l(\mu)}\mu_i^{b_i}\langle\prod_{i=1}^{l(\mu)}\tau_{b_i}\Gamma_g(\tau)
\rangle_{g,l(\mu)}
\end{align*}
When $g=0$ and $l\geq 3$, then $\Gamma_{0}(\tau)=1$ and we have the
Hodge integral identity:
\begin{align*}
\langle\tau_{b_1}\cdots\tau_{b_l}\rangle_{0,l}=\binom{l-3}{b_1,..,b_l}.
\end{align*}
Hence, we obtain
\begin{align} \label{MVGWgenus0}
K^\tau_{\mu,0,\frac{|\mu|}{2}}=(-1)^{|\mu|\tau}[\tau(\tau+1)]^{l(\mu)-1}\prod_{i=1}^{l(\mu)}
\binom{\mu_i(\tau+1)-1}{\mu_i-1}\left(\sum_{i=1}^{l(\mu)}\mu_i\right)^{l(\mu)-3}.
\end{align}
By using formula (\ref{multipecoveringgenus0}), we get
\begin{align}
n_{\mu,0,\frac{|\mu|}{2}}(\tau)=(-1)^{l(\mu)}\sum_{d|\mu}\mu(d)d^{l(\mu)-1}K^{\tau}_{\frac{\mu}{d},0,\frac{|\mu|}{2d}}.
\end{align}
It is clear that $K^\tau_{\mu,0,\frac{|\mu|}{2}}\in \mathbb{Z}$ from
formula (\ref{MVGWgenus0}), and since $l(\mu)\geq 3$, it is clear
that
\begin{align*}
n_{\mu,0,\frac{|\mu|}{2}}(\tau)\in \mathbb{Z}
\end{align*}
for any partition $\mu$ with $l(\mu)\geq 3$.

\subsection{Genus $g \geq 1$ with one hole}
As discussed in the introduction, we have three approaches to the
LMOV invariants $n_{\mu,g,Q}(\tau)$ for the open topological string
model $(\hat{X},\mathcal{D}_\tau)$, as to the higher genus LMOV
invariants $n_{\mu,g,Q}(\tau)$, we would like to use the identity
(\ref{LargeNdualiyofunknot}) of large $N$ duality to change all the
computations from topological string to Chern-Simons theory for knot
invariants. Since the large $N$ duality of topological string and
Chern-Simons theory was conjectured for any framed knots (even
links), we formulate the LMOV integrality conjecture for any framed
knots first, and then we focus on the special case of framed unknot
$U_\tau$ in $S^3$, and give an explicit formula for the LMOV
invariants $n_{(m),g,Q}(\tau)$ whose integrality properties was
proved in \cite{LZhu}.

\subsubsection{Revist LMOV integrality conjecture for framed knot $\mathcal{K}_\tau$}
We introduce the following notations first. Let $n\in \mathbb{Z}$
and $\lambda,\mu,\nu$ denote the partitions. Set
\begin{align} \label{quantuminteger}
\{n\}_x=x^{\frac{n}{2}}-x^{-\frac{n}{2}}, \
\{\mu\}_{x}=\prod_{i=1}^{l(\mu)}\{\mu_i\}_x.
\end{align}
For brevity, we denote $\{n\}=\{n\}_q$ and $\{\mu\}=\{\mu\}_q$. Let
$\mathcal{K}_{\tau}$ be a knot with framing $\tau \in \mathbb{Z}$.
The framed colored HOMFLYPT invariant
$\mathcal{H}(\mathcal{K}_\tau;q,a)$ of $\mathcal{K}_\tau$ is defined
by formula (\ref{framedknotformula}). Let
\begin{align*}
\mathcal{Z}_{\mu}(\mathcal{K}_\tau)=\sum_{\lambda}\chi_{\lambda}(C_\mu)\mathcal{H}_{\lambda}(\mathcal{K}_\tau),
\end{align*}
then the Chern-Simons partition function is given by
\begin{align*}
Z_{CS}^{(S^3,\mathcal{K}_\tau)}=\sum_{\lambda\in
\mathcal{P}}\mathcal{H}_\lambda(\mathcal{K}_\tau)s_{\lambda}(x)
=\sum_{\mu\in
\mathcal{P}}\frac{\mathcal{Z}_{\mu}(\mathcal{K}_\tau)}{\mathfrak{z}_\mu}p_{\mu}(x).
\end{align*}
We define $F_{\mu}(\mathcal{K}_\tau)$ though the expansion formula
\begin{align*}
F_{CS}^{(S^3,\mathcal{K}_\tau)}=\log(Z_{CS}^{(S^3,\mathcal{K}_\tau)})=\sum_{\mu\in
\mathcal{P}^+}F_{\mu}(\mathcal{K}_\tau)p_{\mu}(x),
\end{align*}
then we have
\begin{align*}
F_{\mu}(\mathcal{K}_\tau)=\sum_{n\geq
1}\sum_{\cup_{i=1}^{n}\nu^i=\mu}\frac{(-1)^{n-1}}{n}\prod_{i=1}^{n}\frac{\mathcal{Z}_{\nu^i}(\mathcal{K}_\tau)}{\mathfrak{z}_{\nu^i}}.
\end{align*}

\begin{remark}
For two partitions $\nu^1$ and $\nu^2$, the notation $\nu^1\cup
\nu^2$ denotes the new partition by combing all the parts in $\nu^1,
\nu^2$. For example $\mu=(2,2,1)$, then the set of pairs
$(\nu^1,\nu^2)$ such that $\nu^1 \cup \nu^2 =(2,2,1)$ is
\begin{align*}
(\nu^1=(2), \nu^2=(2,1)),\ (\nu^1=(2,1), \nu^2=(2)),\\\nonumber
 \
(\nu^1=(1), \nu^2=(2,2)), \ (\nu^1=(2,2), \nu^2=(1)), \
\end{align*}
\end{remark}
For a rational function $f(q,a)\in \mathbb{Q}(q^{\pm},a^{\pm})$, we
define the adams operator
\begin{align*}
\Psi_{d}(f(q,a))=f(q^d,a^d).
\end{align*}
Then, we set
\begin{align} \label{formulagmu}
\hat{g}_{\mu}(\mathcal{K}_\tau)=\sum_{d|\mu}\frac{\mu(d)}{d}\Psi_{d}(\hat{F}_{\mu/d}(\mathcal{K}_\tau)),
\end{align}
where
\begin{align*}
\hat{F}_{\mu}(\mathcal{K}_\tau)=\frac{F_{\mu}(\mathcal{K}_\tau)}{\{\mu\}}.
\end{align*}

The LMOV integrality conjecture for framed knot $\mathcal{K}_\tau$
states that
\begin{conjecture} \label{LMOVframedknot2}
$\mathfrak{z}_\mu\hat{g}_{\mu}(\mathcal{K}_\tau)$ is a polynomial of
the LMOV invariants $n_{\mu,g,Q}(\tau)$, more precisely,
\begin{align*}
\mathfrak{z}_\mu\hat{g}_{\mu}(\mathcal{K}_\tau)=\sum_{g\geq
0}\sum_{Q}n_{\mu,g,Q}(\tau)z^{2g-2}a^Q\in
z^{-2}\mathbb{Z}[z^2,a^{\pm \frac{1}{2}}],
\end{align*}
where $z=q^{\frac{1}{2}}-q^{-\frac{1}{2}}=\{1\}$.
\end{conjecture}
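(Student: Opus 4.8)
The plan is to prove the two assertions packaged in the statement separately: the \emph{structural} claim that $\mathfrak{z}_\mu\hat g_\mu(\mathcal{K}_\tau)$ lies in $z^{-2}\mathbb{Q}[z^2,a^{\pm 1/2}]$ (only even powers of $z$, lowest pole $z^{-2}$), and the \emph{arithmetic} claim that its coefficients $n_{\mu,g,Q}(\tau)$ are integers. I would work entirely on the Chern--Simons side, using that $\mathcal{H}_\lambda(\mathcal{K}_\tau;q,a)=(-1)^{|\lambda|\tau}q^{\kappa_\lambda\tau/2}W_\lambda(\mathcal{K};q,a)$ differs from the unframed colored HOMFLY--PT invariant only by the Laurent monomial $(-1)^{|\lambda|\tau}q^{\kappa_\lambda\tau/2}$ (here $\kappa_\lambda$ is even, so the exponent is an integer), so all framed invariants still lie in $\mathbb{Z}[q^{\pm1/2},a^{\pm1/2}]$. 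The chain $\mathcal{Z}_\mu\to F_\mu\to\hat F_\mu\to\hat g_\mu$ is then a plethystic manipulation inside this ring, and the target is to show that taking the free-energy logarithm, dividing by $\{\mu\}$, and applying the Möbius--Adams projection $\sum_{d|\mu}\frac{\mu(d)}{d}\Psi_d$ exactly clears all denominators.

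For the structural claim I would exploit the transposition symmetry of colored HOMFLY--PT, $W_\lambda(\mathcal{K};q^{-1},a)=(-1)^{|\lambda|}W_{\lambda^t}(\mathcal{K};q,a)$, which (using $\kappa_{\lambda^t}=-\kappa_\lambda$) survives the framing twist and, after passing to the combination $\mathcal{Z}_\mu=\sum_\lambda\chi_\lambda(C_\mu)\mathcal{H}_\lambda$ with $\chi_{\lambda^t}(C_\mu)=(-1)^{|\mu|-l(\mu)}\chi_\lambda(C_\mu)$, yields $\mathcal{Z}_\mu(q^{-1},a)=(-1)^{l(\mu)}\mathcal{Z}_\mu(q,a)$. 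Since $\{\mu\}(q^{-1})=(-1)^{l(\mu)}\{\mu\}(q)$ and $z\mapsto -z$ under $q\mapsto q^{-1}$, this forces $\mathfrak{z}_\mu\hat g_\mu$ to be even in $z$, hence the integer grading $2g-2$. The lowest pole $z^{-2}$ is pinned down by a degree count: the factor $\{\mu\}=\prod_i\{\mu_i\}$ contributes a zero of order $l(\mu)$ at $z=0$, while the connected part $F_\mu$ must vanish to order $l(\mu)-2$; I would carry out this bookkeeping using the explicit lowest-order behavior of $M_{\lambda\mu}(q)$ from \eqref{Mlambdamu}, exactly as in the pole analysis of Liu--Peng \cite{LP1}.

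The integrality is the crux, and here I would follow and extend the cut-and-join and cabling strategy of Liu and Peng \cite{LP1,LP3}. The point is that $\hat g_\mu=\sum_{d|\mu}\frac{\mu(d)}{d}\Psi_d(\hat F_{\mu/d})$ is an integer for \emph{arithmetic} reasons, forced by a family of Frobenius-type congruences, one for each prime $p$, of the schematic form $\mathcal{Z}_\mu(\mathcal{K}_\tau;q,a)\equiv\Psi_p\!\left(\mathcal{Z}_{\mu/p}(\mathcal{K}_\tau;q,a)\right)\pmod{p}$ whenever $p\mid\mu$, refined to all prime powers. These are precisely the \emph{congruence skein relations} isolated in \cite{CLPZ}. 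Granting them, the $p$-adic valuations of the $\{d\}$-denominators combine with $\mu(d)/d$ so that each prime $p$ divides out of the sum, and the prefactor $\mathfrak{z}_\mu=|\mathrm{Aut}(\mu)|\prod_i\mu_i$ absorbs the residual character-weighted denominators, yielding coefficients in $\mathbb{Z}$; this is the standard reduction of LMOV-type integrality to prime-by-prime congruences.

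The hard part — and the reason the statement is posed as a conjecture — is establishing the congruence skein relations for an \emph{arbitrary} framed knot $\mathcal{K}_\tau$. For the framed unknot one has the closed product \eqref{unknotformula}, so the congruences can be verified by direct manipulation, which is what drives the explicit formulae proved in \cite{LZhu}; for a general knot no such formula exists, and one must instead propagate the congruences through the HOMFLY skein of the annulus, using the cabling formula to express $\mathcal{Z}_\mu$ via the fundamental invariant and controlling the mod-$p$ behavior of the characters $\chi_\lambda(C_\mu)$ and of the framing twist $q^{\kappa_\lambda\tau/2}$ under the $p$-core/$p$-quotient decomposition of $\lambda$. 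I expect the genuine obstacle to be showing that these congruences survive the cabling uniformly in the framing $\tau$ and for all $\mu$ simultaneously; this is exactly the structural input that \cite{CLPZ} begins to supply and that a complete proof of Conjecture~\ref{LMOVframedknot2} would require.
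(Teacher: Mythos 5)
You cannot be checked against the paper's own proof here, because the paper has none: the statement is Conjecture~\ref{LMOVframedknot2} (a repackaging of Conjecture~\ref{LMOVframedknot}), and the paper only establishes instances of it — the framed unknot, where the explicit formula (\ref{unknotformula}) and the number-theoretic arguments of \cite{LZhu} apply, plus the genus-zero cases computed from the mirror curve. Judged as a standalone argument, your proposal is a programme rather than a proof, as you yourself acknowledge. The structural half is essentially sound and matches the known Liu--Peng analysis: $\kappa_\lambda$ is indeed even, so the framing factor keeps $\mathcal{H}_\lambda$ in $\mathbb{Z}[q^{\pm\frac{1}{2}},a^{\pm\frac{1}{2}}]$; the transposition symmetry together with $\chi_{\lambda^t}(C_\mu)=(-1)^{|\mu|-l(\mu)}\chi_\lambda(C_\mu)$ and $\kappa_{\lambda^t}=-\kappa_\lambda$ does give $\mathcal{Z}_\mu(q^{-1},a)=(-1)^{l(\mu)}\mathcal{Z}_\mu(q,a)$, this propagates through the logarithm because $\sum_i l(\nu^i)=l(\mu)$ on each term $\prod_i\mathcal{Z}_{\nu^i}/\mathfrak{z}_{\nu^i}$, and dividing by $\{\mu\}$ makes $\hat{F}_\mu$ invariant under $q\mapsto q^{-1}$, i.e.\ even in $z$; the $z^{-2}$ bound is the standard degree bookkeeping of \cite{LP1}.

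The genuine gaps are two. First, the crux you flag — prime-power congruences of the form $\mathcal{Z}_\mu\equiv\Psi_p(\mathcal{Z}_{\mu/p})$ modulo appropriate ideals, uniformly in $\tau$ and $\mu$ — is exactly what is unproven for an arbitrary framed knot; \cite{CLPZ} supplies congruence skein relations only in restricted situations, and your step that $\mathfrak{z}_\mu$ ``absorbs the residual character-weighted denominators'' is not an argument: integrality of $n_{\mu,g,Q}$ is normally routed through integrality of $N_{\nu,g,Q}$ via (\ref{formula-invariantchange}), and the mod-$p$ analysis must be carried out at the level of $\hat{F}_\mu$, where denominators $\{d\nu\}$ from $M_{\lambda\mu}(q)^{-1}$ enter, not merely as a congruence on $\mathcal{Z}_\mu$ modulo $p$. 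Second, your pole analysis controls only the behavior at $q=1$ (i.e.\ at $z=0$): a priori $\hat{F}_\mu$ is a rational function with poles wherever $q^{\mu_i}=1$, and the assertion $\mathfrak{z}_\mu\hat{g}_\mu\in z^{-2}\mathbb{Z}[z^2,a^{\pm\frac{1}{2}}]$ includes both the cancellation of all poles at nontrivial roots of unity after the M\"obius--Adams projection and the finiteness statement (vanishing of $n_{\mu,g,Q}(\tau)$ for large $g$ and $Q$, so that one gets a Laurent \emph{polynomial}); neither follows from parity plus the lowest-order count, and in \cite{LP1} the roots-of-unity cancellation is a substantial separate step. So your reduction is the right framing and consistent with the state of the art, including the paper's own partial results via \cite{Zhou} and \cite{LZhu}, but it does not prove the statement — which is precisely why the paper poses it as a conjecture.
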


\subsubsection{LMOV integrality invariants for $U_\tau$}
Now we apply the above computations to the case of framed unknot
$U_{\tau}$. By the large $N$ duality formula in this special case
(\ref{LargeNdualiyofunknot}) proved by \cite{Zhou}, the LMOV
integrality Conjecture \ref{LMOVforopenstring} for the open string
model $(\hat{X},\mathcal{D}_\tau)$ and LMOV integrality conjecture
for the Chern-Simon theory $(S^3,U_\tau)$ are same. In other words,
LMOV invariants for the $(\hat{X},\mathcal{D}_\tau)$ and
$(S^3,U_\tau)$ are the same one.

For convenience, we define the function
\begin{align*}
\phi_{\mu,\nu}(x)=\sum_{\lambda}\chi_{\lambda}(C_\mu)\chi_{\lambda}(C_\nu)x^{\kappa_\lambda}.
\end{align*}
By Lemma 5.1  in \cite{CLPZ},  for $d\in \mathbb{Z}_+$, we have
\begin{align*}
\phi_{(d),\nu}(x)=\frac{\{d\nu\}_{x^2}}{\{d\}_{x^2}}.
\end{align*}
By using the formula of colored HOMFLYPT invariant for unknot
(\ref{unknotformula}), we obtain
\begin{align*}
\mathcal{Z}_{\mu}(U_\tau)&=\sum_{\lambda}\chi_{\lambda}(C_\mu)\mathcal{H}_{\lambda}(U_\tau)\\\nonumber
&=(-1)^{|\mu|\tau}\sum_{\lambda}\chi_{\lambda}(C_\mu)q^{\frac{\kappa_\lambda
\tau}{2}}\sum_{\nu}\frac{\chi_{\lambda}(C_\nu)}{z_\nu}\frac{\{\nu\}_a}{\{\nu\}}\\\nonumber
&=(-1)^{|\mu|\tau}\sum_{\nu}\frac{1}{\mathfrak{z}_\nu}\phi_{\mu,\nu}(q^\frac{\tau}{2})\frac{\{\nu\}_a}{\{\nu\}}.
\end{align*}
In particular, for $\mu=(m)$, $m\in \mathbb{Z}$, we have
\begin{align*}
\mathcal{Z}_{m}(U_\tau)=(-1)^{m\tau}\sum_{|\nu|=m}\frac{1}{\mathfrak{z}_\nu}\frac{\{m\nu\tau\}}{\{m\tau\}}\frac{\{\nu\}_a}{\{\nu\}}.
\end{align*}
For brevity, we set
$\mathcal{Z}_m(q,a)=\frac{1}{\{m\}}\mathcal{Z}_{m}(U_\tau)=(-1)^{m\tau}\sum_{|\nu|=m}\frac{1}{\mathfrak{z}_\nu}
\frac{\{m\nu\tau\}}{\{m\}\{m\tau\}}\frac{\{\nu\}_a}{\{\nu\}}$ and
$g_m(q,a)=\mathfrak{z}_{(m)}\hat{g}_m(U_\tau)$. Then, by formula
(\ref{formulagmu}), we obtain
\begin{align} \label{gm}
g_m(q,a)=\sum_{d|m}\mu(d)\mathcal{Z}_{m/d}(q^d,a^d).
\end{align}

In \cite{LZhu}, we have proved that $g_{m}(q,a)$ is a polynomial of
 the higher genus with one hole LMOV invariants  $n_{m,g,Q}(\tau)$.
More precisely,
\begin{align*}
g_m(q,a)=\sum_{g\geq 0}\sum_{Q}n_{m,g,Q}(\tau)z^{2g-2}a^Q\in
z^{-2}\mathbb{Z}[z^2,a^{\pm \frac{1}{2}}],
\end{align*}
where $z=q^{\frac{1}{2}}-q^{-\frac{1}{2}}=\{1\}$. In other words, we
have
\begin{align}
n_{m,g,Q}(\tau)=\text{Coefficient of term $z^{2g-2}a^Q$ in the
polynomial } g_m(q,a).
\end{align}

\section{Conclusions and related works}
In this finial section, we mention some related works and problems
which are deserved to study further.

\begin{itemize}
\item Applications of our explicit formulae. In A. Mironov et
al's work \cite{MMMS}, they made a lot of numerical computations for
a large variety of LMOV invariants by using their recent works on
knot invariants. By experimental observation, they proposed a
conjecture that the absolute values of the LMOV invariants
$N_{\mu,g,Q}^{\mathcal{K}}$ for big enough representations $\mu$
approach Gaussian/binomial distribution in $g$ with just three
$\mu$, $Q$-dependent parameters (cf. Conjecture in Section 5 of
\cite{MMMS} and note that the meanings of the symbols $\mu,g, Q$
used here are corresponding to $Q,g,n$ in \cite{MMMS}). The LMOV
invariant $n_{\mu,g,Q}$ are related to $N_{\mu,g,Q}$ through a
character transformation formula (\ref{formula-invariantchange}). So
we expect that our explicit formulae could provide a rigid proof of
their conjecture at least for the case of framed unknot $U_\tau$.

\item Interpretations of the integrality of LMOV invariants.  In
\cite{LZhu0}, we found a relation between the open string partition
of $\mathbb{C}^3$ with AV brane $D_\tau$ and the Hilbert-Poincar\'e
series of the Cohomological Hall algebra of the $|\tau|$-loop quiver
in the sense of \cite{KS}. This is the first example of toric
Calabi-Yau and quiver correspondence. Then in
\cite{KRSS1,KRSS2,Zhu4}, a general knot-quiver correspondence was
proposed. Especially, Sulkowski et al. \cite{KRSS2,PSS} established
this correspondence for a large class of knot, and links.

\item Compute the explicit formulae for LMOV invariants in more general
settings. In the recent work of Panfil and Sulkowski \cite{PS}, they
found  a direct relation between quiver representation theory and
open topological string theory on a class of toric Calabi-Yau
3-folds referred as strip geometries. With the help of the relation
to quivers they also derive explicit expressions for classical open
BPS invariants for an arbitrary strip geometry, which lead to a
large set of number theoretic integrality statements. In
\cite{Zhu5}, we generalize our current work to the case of torus
knots and other settings, more explicit formulae for corresponding
LMOV invariants are obtained. We will develop a general number
theory method to prove integrality properties of LMOV invariants.

\end{itemize}

\end{document}